\theoremstyle{plain}  
\newtheorem{thm}{Theorem}[section]
\theoremstyle{definition}
\theoremstyle{remark}
\newcommand{\F}{\mathcal{F}}
\newcommand{\one}{\mathbbm{1}}
\newcommand\blfootnote[1]{%
	\begingroup
	\renewcommand\thefootnote{}\footnote{#1}%
	\addtocounter{footnote}{-1}%
	\endgroup
}
\begin{document}
	
	\title{A safe Hosmer-Lemeshow test}
	
	\author[1]{Alexander Henzi}
	\author[2]{Marius Puke}
	\author[3]{Timo Dimitriadis}
	\author[4]{Johanna Ziegel}

	\affil[1]{ETH Z\"{u}rich, Seminar for Statistics, Z\"{u}rich, Switzerland. 
		\href{mailto:alexander.henzi@stat.math.ethz.ch}{alexander.henzi@stat.math.ethz.ch}}
	
	\affil[2]{University of Hohenheim, Department of Economics and Computational Science Hub, Stuttgart, Germany. 
		\href{mailto:marius.puke@uni-hohenheim.de}{marius.puke@uni-hohenheim.de}}
	
	\affil[3]{Heidelberg University, Alfred-Weber-Institute of Economics and Heidelberg Institute for Theoretical Studies (HITS), Heidelberg, Germany. 
		\href{mailto:timo.dimitriadis@awi.uni-heidelberg.de}{timo.dimitriadis@awi.uni-heidelberg.de}}
	
	\affil[4]{University of Bern, Institute of Mathematical Statistics and Actuarial Science, Bern, Switzerland. 
		\href{mailto:johanna.ziegel@stat.unibe.ch}{johanna.ziegel@stat.unibe.ch}}
	
	\setcounter{Maxaffil}{0}
	\renewcommand\Affilfont{\itshape\small}

	\maketitle
	
	\begin{abstract}
		This article proposes an alternative to the Hosmer-Lemeshow (HL) test for evaluating the calibration of probability forecasts for binary events. The approach is based on e-values, a new tool for hypothesis testing. An e-value is a random variable with expected value less or equal to one under a null hypothesis. Large e-values give evidence against the null hypothesis, and the multiplicative inverse of an e-value is a p-value. 
		Our test uses online isotonic regression to estimate the calibration curve as a `betting strategy' against the null hypothesis. 
		We show that the test has power against essentially all alternatives, which makes it theoretically superior to the HL test and at the same time resolves the well-known instability problem of the latter.
		A simulation study shows that a feasible version of the proposed eHL test can detect slight miscalibrations in practically relevant sample sizes, but trades its universal validity and power guarantees against a reduced empirical power compared to the HL test in a classical simulation setup.
		We illustrate our test on recalibrated predictions for credit card defaults during the Taiwan credit card crisis, where the classical HL test delivers equivocal results.\blfootnote{The first two authors contributed equally to this work.} \\[0.8em]
		\noindent \textit{Keywords:} 	
		E-value; Probability forecast; Calibration validation; Goodness-of-fit; Isotonic regression
	\end{abstract}

	\section{Introduction}
	
	Suppose that we have a sample of observations $(p_i,y_i)_{i=1}^n$ from $(P_i,Y_i)_{i=1}^n$ such that $(P_i,Y_i)$ has the same distribution as $(P,Y) \in [0,1]\times \{0,1\}$, $i=1,\dots,n$.
	The interpretation is that $P_i$ is a prediction for the probability that $Y_i = 1$. 
	The random variables are defined on some underlying probability space $(\Omega,\F)$ and  $\mathcal{P}$ denotes all probability measures on $(\Omega,\F)$. \citet{HosmerLemeshow1980} propose a test for the null hypothesis of perfect calibration 
	\[
	\mathcal{H}_{\text{HL},n} = \{\mathbb{P} \in \mathcal{P}\;|\; \mathbb{E}_{\mathbb{P}}(Y_i | P_i) = P_i\; \text{ $\mathbb{P}$-almost surely},\; i=1,\dots,n\}.
	\]

	The Hosmer-Lemeshow (henceforth HL) test is based on partitioning the interval $ [0,1] $ in $ g \in \mathbb{N} $ bins and counting the  observed numbers of events, $ o_{1g} $, and no event occurrences, $ o_{0g} $, in each bin. Based on that binning and counting procedure, the HL test statistic  to test for perfect calibration of the probability predictions is 
	\begin{align}
		\label{eq:HLTest}
		\widehat C = \sum_{k=1}^{g} \left[ \dfrac{(o_{1k} - \widehat e_{1k})^2}{\widehat e_{1k}} + \dfrac{(o_{0k} - \widehat e_{0k})^2}{\widehat e_{0k}} \right] ,
	\end{align} 
	where $ \widehat{e}_{1k} $ and $ \widehat{e}_{0k} $ are the expected event and no event occurrences in bin $ k $, respectively \citep{HosmerLemeshowSturdivant2013}.
	Under the null hypothesis, $\widehat C$ asymptotically follows a $\chi^2$-distribution with $g$ degrees of freedom given that the sample $(P_i,Y_i)_{i=1}^n$ was not used for model estimation (and $g-2$ degrees of freedom otherwise).
	
	Technically, the choice of the binning procedure is up the user of the HL test and is conventionally implemented via quantile based binning strategies with $g=10$, resulting in equally populated bins (decile-of-risk). Less commonly, the test is based on equidistantly spaced bins, where the unit interval (or the range of prediction values) is divided  into $g$ equidistant bins.
	While little attention is devoted to the binning procedure in practical applications, it implicitly determines the set of alternatives the test has power against \citep[Section 5]{Dimitriadis2022Honest}, such that the test result is often highly sensitive to the exact implementation of the binning; see e.g., \citet{Hosmeretal1997, Bertolini2000, Kuss2002} and our empirical application in Section \ref{sec:application}.
	Nevertheless, the HL test is still the literature's favorite for checking the calibration of binary prediction models and commonly used in current and highly influential medical and epidemiological studies; see amongst many others \citet{HLappl1, HLappl3, HLappl4}.
	

	In this article, we suggest a safe and stable HL test based on e-values (that we describe below) and isotonic regression \citep{Ayer1955, Brunk1965}.
	The test is henceforth called eHL test.
	\cite{Dimitriadise2016191118} recently propose the use of isotonic regression to resolve the closely related instability issue stemming from binning approaches in so-called reliability diagrams in forecast evaluation. 
	While feasible inference on the isotonic regression for classical testing procedures is hampered by complicated asymptotic distributions and an inconsistency of the bootstrap, the e-values adopted here prove to be an appealing alternative in this setting.
	Based on online isotonic regression studied by \citet{Kotlowski2017}, we show that (an ideal version of) our eHL test has power against essentially all deviations from calibration, which makes it theoretically superior to the classical HL test.
	
	E-values, where `e' abbreviates the word `expectation',  were proposed recently as an alternative to p-values in testing problems. In a nutshell, an e-value is a realization of a non-negative random variable whose expected value is at most one under a given null hypothesis. This already signals that an e-value itself allows for meaningful interpretations since an e-value greater than one provides evidence against the null hypothesis. 
	Additionally, the multiplicative inverse of an e-value is a conservative p-value by Markov's inequality. From a game-theoretic perspective, the e-value has a simple financial meaning in the sense that the e-value can be seen as the factor by which a skeptic multiplies her money when betting against the null hypothesis; see \citet{ShaferVovk2019,Shafer2021}. 
	
	An important advantage of e-values over p-values is their uncomplicated behavior in combinations: the arithmetic average of e-values also is an e-value, likewise the product of independent or successive e-values; see \citet{Shafer2021,GrunwaldHeideETAL2019,WangRamdas2020}. In practice, this appeals because more evidence can be added later, i.e.~evidence across studies can easily be combined. 
	
	The proposed eHL test offers a safe alternative to a fragile state-of-the-art approach by avoiding ad-hoc choices and software instabilities.
	It can be regarded as an application of the Universal Inference approach of \citet{Wasserman2020}. While this method allows to construct valid tests under only weak assumptions, it has been observed that this validity often comes at the price of a diminished power \citep{Strieder2022, Tse2022}. In Section \ref{sec:construction_eHL}, we show that an ideal -- but computationally infeasible -- variant of the eHL test does have guaranteed power to detect essentially all violations of calibration. Our proof relies on connections between the proposed e-value and the regret in random permutation online isotonic regression, which is studied by \citet{Kotlowski2017}. It has been observed that power guarantees for anytime-valid tests can be obtained by means of regret bounds of online prediction methods, see for example the discussion in \citet{Casgrain2022}. Our result demonstrates that such a connection also exists in the batch case of $e$-values for a fixed sample size $n$ due to connections with the online random permutation setting.

	In Section \ref{sec:simulation}, we compare a feasible version of the eHL test to the classical HL test in a simulation study.
	As expected, we find that the eHL test has conservative rejection rates under the null hypothesis and quickly develops power under model mispecification.
	While its empirical test power is lower than the one of the classical HL test, we do not consider this to be problematic as HL tests are often carried out in cases of vast data sets and are even criticized as being ``too powerful'' in that they reject essentially all, even acceptably well calibrated models \citep{Paul2013, Nattino2020}.
	See \cite{Dimitriadis2022Honest} for an alternative solution to this problem based on confidence bands.

	We apply the eHL test in Section \ref{sec:application} to predictions of a logistic regression model for the binary event of credit card defaults in Taiwan in 2005, where over-issuing of credit cards lead to many default payments and a subsequent credit card crisis \citep{YehLien2009}.
	The eHL test provides clear evidence against calibration of the logistic model predictions, and further illustrates that recalibration methods work well.
	In contrast, the classical HL test based on different natural binning choices delivers equivocal results with p-values ranging from 0 to 0.91 for a single prediction method, implying that a researcher could have cherry-picked the binning specification and hence the test result to her will.
	
	\section{Construction of HL e-values \label{sec:construction_eHL}}
	
	\subsection{Preliminaries}
	
	An e-variable for $\mathcal{H}_{\text{HL},n}$ is a non-negative random variable $E$ (that is allowed to take the value $+\infty$) such that $\mathbb{E}_{\mathbb{P}}(E) \le 1$ for all $\mathbb{P} \in \mathcal{P}$. An e-value is a realization of an e-variable. An e-variable $E$ always yields a valid p-variable $1/E$ (a p-value is a realized p-variable) by Markov's inequality, since
	\begin{equation}\label{eq:Markov}
		\mathbb{P}\Big(\frac{1}{E} \le \alpha\Big) = \mathbb{P}\Big(E \ge \frac{1}{\alpha}\Big) \le \alpha \mathbb{E}_{\mathbb{P}}(E) \le \alpha, \quad \text{for all $\mathbb{P} \in \mathcal{H}_{\text{HL},n}$}.
	\end{equation}
	We reject the null hypothesis $\mathcal{H}_{\text{HL},n}$ if we observe a large value of $E$. If we want to ensure a classical p-guarantee then we have to determine the rejection region for a given $\alpha$ by \eqref{eq:Markov}. \citet{VovkWang2021} show that this is essentially the only way to transform an e-variable into a p-variable.
	We say that an e-variable has the alternative hypothesis $\mathcal{H'} \subset \mathcal{P}$ if $\mathbb{E}_{\mathbb{Q}}(E) > 1$ for all $\mathbb{Q} \in \mathcal{H}'$.

	\subsection{Sample size one}\label{sec:HL1lambda}
	We first construct e-variables for the sample size one Hosmer-Lemeshow null hypothesis
	\[
	\mathcal{H}_{\text{HL},1} = \{\mathbb{P} \in \mathcal{P}\;|\; \mathbb{E}_{\mathbb{P}}(Y | P) = P \}.
	\]
	In the special case here, e-variables are likelihood ratios conditional on $P$. Indeed, if $q \in [0,1]$, an e-variable for $\mathcal{H}_{\text{HL},1}$ is given by
	\[
	E_q(P,Y) = \frac{q^Y(1-q)^{1-Y}}{P^Y(1-P)^{1-Y}} = \begin{cases}q/P, & \text{if $Y = 1$,}\\ (1-q)/(1-P), &\text{if $Y=0$.} \end{cases}
	\]
	The variable $E_q(P,Y)$ is clearly non-negative, and for $\mathbb{P} \in \mathcal{H}_{\text{HL},1}$,
	\begin{align*}
		\mathbb{E}_{\mathbb{P}}(E_q(P,Y)) & = \mathbb{E}_{\mathbb{P}}\left(\mathbb{E}_{\mathbb{P}}(Y \mid P)\frac{q}{P}  + \mathbb{E}_{\mathbb{P}}(1-Y \mid P)\frac{1-q}{1-P} \right) \\
		& = \mathbb{E}_{\mathbb{P}}\left(P\frac{q}{P} + (1-P) \frac{1-q}{1-P}\right) = 1.
	\end{align*}
	To find alternative hypotheses for the e-variable $E_q$, let $\bar{\pi} = \mathbb{E}_{\mathbb{Q}}(Y \mid P)$. Then, 
	\begin{align*}
		\mathbb{E}_{\mathbb{Q}}(E_q(P, Y) \mid P) & = \bar{\pi} \frac{q}{P} + (1 - \bar{\pi})\frac{1-q}{1-P} 
	\end{align*}
	is strictly larger one if and only if, $\bar{\pi} > P$ and $q > P$, or, $\bar{\pi} < P$ and $q < P$, i.e., if $\pi$ and $q$ are to the same side of $P$.
	This shows that if $q < P$, $E_q$ has the alternative
	\begin{equation}\label{eq:alternative_q<P}
		\mathcal{H}' = \{\mathbb{Q} \in \mathcal{P} \mid \mathbb{E}_{\mathbb{Q}}(Y\mid P) < P\},
	\end{equation}
	and if $q > P$, $E_q$ has the alternative
	\begin{equation}\label{eq:alternative_q>P}
		\mathcal{H}' = \{\mathbb{Q} \in \mathcal{P} \mid \mathbb{E}_{\mathbb{Q}}(Y\mid P) > P\}.
	\end{equation}
	
	
	It is possible to show that basically any e-variable for $\mathcal{H}_{\text{HL},1}$ is of the form $E = E_q(P, Y)$ for some $q$ (depending on $P$) but this requires some more arguments; it follows by the construction in \citet{HenziZiegel2021}, see also \citet{Waudby-SmithRamdas2020}. 
	The connection of $E_q(P,Y)$ to the e-variables in \citet{HenziZiegel2021} of type $E = 1 + \lambda D$ with $D \geq -1$ such that $\mathbb{E}_{\mathbb{P}}(D) = 0$ for $\mathbb{P} \in \mathcal{H}_{\text{HL},1}$, follows from the fact that $\lambda$ in this representation can be bijectively mapped to $q$. In this context, 
	\begin{equation}\label{eq:Elambda}
		E = 1 + \lambda (P - Y)
	\end{equation} is an  e-variable for $\mathcal{H}_{\text{HL},1}$ for any $\lambda$ that is $\sigma(P)$-measurable with $-(1/P) \le \lambda \le 1/(1-P)$. If $P = 1$, there is no restriction on $\lambda$ from above, and analogously if $P = 0$, there is no restriction from below. By choosing $\lambda = (P-q)/(P(1-P))$, we obtain that $E = E_q(P,Y)$.
	
	Clearly, the e-variable $E_q(P,Y)$ may take the value infinity if either $P = 0$ and $Y = 1$ or $P = 1$ and $Y = 0$ occurs; a single observation $Y = 1$ or $Y = 0$ is sufficient to reject the hypothesis of calibration with certainty if the predicted probabilities are in $\{0,1\}$. For the remainder of the theoretical part of this paper, we will always make the assumption $\mathbb{P}(P \in \{0,1\}) = 0$ to exclude these special but uninteresting cases.
	
	%

	\subsection{Combining e-values in the iid case \label{sec:comb_evals}}
	
	We assume now that $(P_i,Y_i)_{i=1}^n$ are independent and identically distributed (iid).
	For testing $\mathcal{H}_{\text{HL},n}$, we suggest the e-variable
	\begin{equation}\label{eq:EHLn}
		E_{\text{HL},n}^{\text{id}} = \prod_{i=1}^n E_{q_i}(P_i,Y_i),
	\end{equation}
	where $q_i$ is $\sigma(P_1,\dots,P_i,Y_1,\dots,Y_{i-1})$-measurable. For $\mathbb{P}\in \mathcal{H}_{\text{HL},n}$, we have
	\begin{align*}
		\mathbb{E}_{\mathbb{P}}E_{\text{HL},n}^{\text{id}} &= \mathbb{E}_{\mathbb{P}}\Big(\mathbb{E}_{\mathbb{P}}\Big(\prod_{i=1}^n E_{q_i}(P_i,Y_i)|P_1,\dots,P_n,Y_1,\dots,Y_{n-1}\Big)\Big)\\
		&= \mathbb{E}_{\mathbb{P}}\Big(\prod_{i=1}^{n-1} E_{q_i}(P_i,Y_i)\mathbb{E}_{\mathbb{P}}\Big(E_{q_n}(P_n,Y_n)|P_1,\dots,P_n,Y_1,\dots,Y_{n-1}\Big)\Big)\\
		&= \mathbb{E}_{\mathbb{P}}\Big(\prod_{i=1}^{n-1} E_{q_i}(P_i,Y_i)\Big(1 + \frac{P_n - q_n}{P_n(1-P_n)} \mathbb{E}_{\mathbb{P}}(P_n - Y_n |P_1,\dots,P_n,Y_1,\dots,Y_{n-1})\Big)\Big)\\
		&= \mathbb{E}_{\mathbb{P}}\Big(\prod_{i=1}^{n-1} E_{q_i}(P_i,Y_i)\Big(1 + \frac{P_n - q_n}{P_n(1-P_n)}  \mathbb{E}_{\mathbb{P}}(P_n - Y_n |P_n)\Big)\Big)\\
		& = \mathbb{E}_{\mathbb{P}}\Big(\prod_{i=1}^{n-1} E_{q_i}(P_i,Y_i)\Big) = \mathbb{E}_{\mathbb{P}}E_{\text{HL},n-1}^{\text{id}} = \dots =  1,
	\end{align*}
	where we used the equivalent representation of $E_q(P,Y)$ in \eqref{eq:Elambda}.
	In particular, from the above derivation it is easy to see that $(E_{\text{HL},n}^{\text{id}})_{n \in \mathbb{N}}$ is a test martingale. 
	
	The e-variable $E_{\text{HL},n}^{\text{id}}$ depends on the ordering of $(P_i,Y_i)_{i=1}^n$ through the choice of $q_i$. Let $S_n$ denote all permutations of $\{1,\dots,n\}$, and for $\sigma \in S_n$ define $E_{\text{HL},n}^{\sigma}$ as $E_{\text{HL},n}^{\text{id}}$ for the random variables $(P_{\sigma(i)},Y_{\sigma(i)})_{i=1}^n$ instead of $(P_i,Y_i)_{i=1}^n$. Generally, 
	\[
	\sup_{\sigma \in S_n} E_{\text{HL},n}^{\sigma}
	\]
	is not an e-variable for $\mathcal{H}_{\text{HL},n}$, so one would guess that there are opportunities to fish for (spurious) significance by choosing some specific ordering of a sample of observations $(p_i,y_i)_{i=1}^n$. If there is a natural ordering of the observations such as a time stamp then the problem usually does not occur in applications since a different ordering of the observations is hard to justify. Indeed, when the observations are sequential (and possibly dependent), the e-variable defined at \eqref{eq:EHLn} is also an e-variable for the hypothesis 
	\[
	\mathcal{H}_{\text{HL},n,seq} = \{\mathbb{P} \in \mathcal{P}\;|\; \mathbb{E}_{\mathbb{P}}(Y_i | P_1,\dots,P_{i},Y_1,\dots,Y_{i-1}) = P_i\; \text{$\mathbb{P}$-almost surely},\; i=1,\dots,n\}.
	\]
	Contrary to classical theory, the sequential case is easier to treat than the iid case and has been the focus of many works employing e-values including for example \citet{Waudby-SmithRamdas2020,HenziZiegel2021}.
	
	Coming back to our situation with iid data, an alternative to \eqref{eq:EHLn} could be
	\[
	E_{\text{HL},n,\text{sym}} = \frac{1}{n! } \sum_{\sigma \in S_n} E_{\text{HL},n}^{\sigma}.
	\]
	This strategy is essentially the merging technique for independent e-values in Section 4 of \citet{VovkWang2021}, and the object of interest in this article.

	\subsection{An ideal test with power guarantees}
	\label{sec:IdealTest}
	
	The statistic $E_{\text{HL},n,\text{sym}}$ is an e-variable solely under the requirement that for $i = 1, \dots, n$ and all permutations $\sigma$, the probabilities $q_{\sigma(i)}$ in $E_{\text{HL},n}^{\sigma}$ are a measurable function of $(P_{\sigma(j)}, Y_{\sigma(j)})$, $j = 1, \dots, i - 1$, and of $P_{\sigma(i)}$. In the following, we write
	\[
	q_{\sigma, \sigma(i)} = f_i(P_{\sigma(1)}, \dots, P_{\sigma(i)}, Y_{\sigma(1)}, \dots, Y_{\sigma(i-1)}),
	\]
	using the same algorithm $f_i$ for constructing $q_{\sigma, \sigma(i)}$ based on $P_{\sigma(1)}, \dots, P_{\sigma(i)}, Y_{\sigma(1)}, \dots, Y_{\sigma(i-1)}$ for all permutations $\sigma$. The challenge is then how to choose the functions $f_1, \dots, f_n$ such that the test has power. As argued by \citet{GrunwaldHeideETAL2019,Shafer2021}, a suitable measure of power for e-values is the growth rate $\mathbb{E}_{\mathbb{Q}}[\log(E)]$ under an alternative distribution $\mathbb{Q}$, so that ideally, $E$ grows exponentially fast in the sample size if the null hypothesis is violated. 
	
	Our algorithm for choosing $f_1, \dots, f_n$ is inspired by permutation online isotonic regression, studied extensively by \citet{Kotlowski2017}. In machine learning applications, isotonic regresison is an established method for the recalibration of binary classifiers; see e.g.\ \citet{ZadroznyElkan2002}  or \citet{Flach2012}. Recently, \citet{Dimitriadise2016191118} related the isotonic regression approach to reliability diagrams, which are a key diagnostic tool in evaluating probability forecast for binary events, especially in meteorology. Our results demonstrate that isotonic regression is also suitable for constructing universal tests of calibration.
	
	To introduce the algorithm for constructing our e-variable, let $p_1, \dots, p_i \in [0,1]$ be probability predictions and $y_1, \dots, y_i \in \{0,1\}$ be the corresponding outcomes. Then the isotonic regression of $y_1, \dots, y_i$ on $p_1, \dots, p_i$ can be described as the maximizer of
	\begin{equation} \label{eq:isoreg}
		\hat{R}_n(g_1, \dots, g_i) = \hat{R}(g_1, \dots, g_i; \, p_1, \dots, p_i, y_1, \dots, y_i) = \sum_{j=1}^i \log \, \left( \left(\frac{g_j}{p_j}\right)^{y_j}\left(\frac{1-g_j}{1-p_j}\right)^{1-y_j} \right),
	\end{equation}
	over all $g_1, \dots, g_i$ such that $g_k \leq g_l$ if $p_k \leq p_l$. 
	Notice that the quantity in \eqref{eq:isoreg} is simply a normalized version of the logarithmic score, and the maximizer does not depend on the fact that we normalize by $p_j^{y_j}(1-p_j)^{1-y_j}$. 
	Moreover notice that up to rescaling by $1/i$, this criterion also equals the sample version of $\mathbb{E}_{\mathbb{Q}}[\log(E)]$ when the e-variable $E$ is the likelihood ratio between the probabilities $g_j$ and $p_j$. A unique maximizer exists --- unique since we exclude the cases $p_j = 0$ and $y_j = 1$ or $p_j = 1$ and $y_j = 0$ for some $j$ --- and can be computed efficiently with the PAV-Algorithm \citep{Ayer1955}. This estimator only defines a recalibrated version of $p_1, \dots, p_i$, and a method is required to define the regression at a $p_{i+1} \in [0,1]$ not contained in the sample. To obtain out-of-sample predictions with small regret in terms of log-loss, we rely on a strategy originally proposed by \citet{Vovk2015} and applied by \citet{Kotlowski2017} to derive regret bounds for isotonic regression in an online setting. The out-of-sample value at $p_{j+1}$ is defined as follows, 
	\begin{equation}  \label{eq:isoreg_oos}
		f_{i+1}(p_1, \dots, p_i, p_{i+1}, y_1, \dots, y_i) = \frac{g_{i+1,1}}{g_{i+1,1}+1-g_{i+1,0}},
	\end{equation}
	where $g_{i+1,1}$ and $g_{i+1,0}$ are the $(i+1)$-th component the isotonic regression of $p_i, \dots, p_i, p_{i+1}$ with observations $y_1, \dots, y_i, 1$ or $y_1, \dots, y_i, 0$, respectively. 
	That is, to define the isotonic regression at the unseen $p_{i+1}$, we fit two isotonic regression in which we include $p_{i+1}$ in the sample with artificial observations of $1$ and of $0$ respectively, and take the ratio \eqref{eq:isoreg_oos} as recalibrated probability. The definition \eqref{eq:isoreg_oos} is extended to the case $i = 0$ by setting $g_{1,1} = g_{1,0} = 0.5$. The workflow to construct $E_{\text{HL},n,\text{sym}}$ is then described in Algorithm \ref{alg:esym}.
	
	\begin{algorithm}
		\caption{Construction of $E_{\text{HL},n,\text{sym}}$}\label{alg:esym}
		\begin{algorithmic}[1]
			\State $E_{\text{HL},n,\text{sym}} \gets 0$
			\For{all permutations $\sigma$ of $\{1, \dots, n\}$}
			\State $E_{\text{HL},n}^{\sigma} \gets (0.5/P_{\sigma(1)})^{Y_{\sigma(1)}}(0.5/(1-P_{\sigma(1)}))^{1-Y_{\sigma(1)}}$
			\For{$i=1,\dots,n-1$}
			\State $q_{\sigma, \sigma(i+1)} \gets f_{i}(P_{\sigma(1)}, \dots, P_{\sigma(i+1)}, Y_{\sigma(1)}, \dots, Y_{\sigma(i)})$ as defined in \eqref{eq:isoreg_oos}
			\State $E_{\text{HL},n}^{\sigma} \gets E_{\text{HL},n}^{\sigma} \cdot (q_{\sigma, \sigma, \sigma(i+1)}/P_{\sigma(i+1)})^{Y_{\sigma(i+1)}}((1-q_{\sigma, \sigma(i+1)})/(1-P_{\sigma(i+1)}))^{1-Y_{\sigma(i+1)}}$
			\EndFor
			\State $E_{\text{HL},n,\text{sym}} \gets E_{\text{HL},n,\text{sym}} + E_{\text{HL},n}^{\sigma}/n!$
			\EndFor
			\State \textbf{return} $E_{\text{HL},n,\text{sym}}$
		\end{algorithmic}
	\end{algorithm}
	
	To state a result about the power of $E_{\text{HL},n,\text{sym}}$, we need a population version of the isotonic regression estimator. For a function $\pi \colon [0,1] \rightarrow [0,1]$, let
	\[
	R_{\mathbb{Q}}(\pi) = \mathbb{E}_{\mathbb{Q}}\left[\log \left( (\pi(P)/P)^Y((1-\pi(P))/(1-P))^{1-Y}\right)\right]
	\]
	if this expectation exists. Let $\mathcal{F}_{\uparrow, [0,1]}$ be the set of nondecreasing functions $\pi \colon [0,1] \rightarrow [0,1]$. If $\mathbb{Q}$ is the empirical distribution of $(P_1, Y_1), \dots, (P_n, Y_n)$, then it is easy to see that $R_{\mathbb{Q}}$ coincides with the target function $\hat{R}$ of the usual isotonic regression in finite samples. With these definitions, we can state the following result about the power of $E_{\text{HL},n,\text{sym}}$.
	
	\begin{thm} \label{thm:power}
		Let $(P_1, Y_1), \dots, (P_n, Y_n), (P, Y)$ be iid with distribution $\mathbb{Q}$ such that
		\begin{equation} \label{eq:integrability}
			\mathbb{E}_{\mathbb{Q}}[\log(P)^2 + \log(1-P)^2] < \infty.
		\end{equation}
		Then,
		\begin{itemize}
			\item[(i)] there exists a $\mathbb{Q}$-almost-surely unique maximizer $\pi^* \in \mathcal{F}_{\uparrow, [0,1]}$ of $R_{\mathbb{Q}}$;
			\item[(ii)] for a version of $\pi^*$ from part (i), let
			\[
			D(\mathbb{Q}) = R_{\mathbb{Q}}(\pi^*) = \mathbb{E}_{\mathbb{Q}}[\log \, (\pi^*(P)/P)^Y((1-\pi^*(P))/(1-P))^{1-Y}];
			\]
			then $D(\mathbb{Q}) \geq 0$, with equality if and only if $\mathbb{Q} \in \mathcal{H}_{\text{HL},1}$;
			\item[(iii)] the e-value $E_{\text{HL},n,\text{sym}}$ constructed with Algorithm \ref{alg:esym} satisfies
			\[
			E_{\text{HL},n,\text{sym}} \geq \exp\left(\sum_{i=1}^{n}\log\, \left(\frac{\pi^*(P_i)}{P_i}\right)^{Y_i} \left(\frac{1-\pi^*(P_i)}{1-P_i}\right)^{1-Y_i} -C\sqrt{n\log(n)^2} \right)
			\]
			for an universal constant $C > 0$, and hence
			\[
			\mathbb{E}_{\mathbb{Q}}[\log(E_{\text{HL},n,\text{sym}})] \geq nD(\mathbb{Q}) - C\sqrt{n\log(n)^2}.
			\]
		\end{itemize}
	\end{thm}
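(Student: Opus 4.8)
The plan is to pass to the calibration curve $\bar\pi(p) = \mathbb{E}_{\mathbb{Q}}(Y \mid P = p)$ and to rewrite the population criterion as a Kullback--Leibler discrepancy. Writing $\mathrm{KL}(a\,\|\,b) = a\log(a/b) + (1-a)\log((1-a)/(1-b))$ for the Bernoulli divergence, the elementary identity $a\log(g/p) + (1-a)\log((1-g)/(1-p)) = \mathrm{KL}(a\,\|\,p) - \mathrm{KL}(a\,\|\,g)$ together with the tower property yields
\[
R_{\mathbb{Q}}(\pi) = \mathbb{E}_{\mathbb{Q}}\big[\mathrm{KL}(\bar\pi(P)\,\|\,P)\big] - \mathbb{E}_{\mathbb{Q}}\big[\mathrm{KL}(\bar\pi(P)\,\|\,\pi(P))\big].
\]
The first term does not depend on $\pi$ and is finite under \eqref{eq:integrability}, which controls $\mathbb{E}_{\mathbb{Q}}|\log P|$ and $\mathbb{E}_{\mathbb{Q}}|\log(1-P)|$. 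Hence maximizing $R_{\mathbb{Q}}$ over $\mathcal{F}_{\uparrow,[0,1]}$ is the population isotonic regression of $\bar\pi$ with respect to $\mathbb{Q}_P$, i.e.\ minimizing $\mathbb{E}_{\mathbb{Q}}[\mathrm{KL}(\bar\pi(P)\,\|\,\pi(P))]$ over nondecreasing $\pi$. This reduces (i) and (ii) to properties of this projection and (iii) to a regret bound.

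For part (i), existence follows by compactness: along a maximizing sequence of nondecreasing $[0,1]$-valued functions, Helly's selection theorem yields an almost-everywhere convergent subsequence with nondecreasing limit $\pi^*$, and since the integrand $\bar\pi(P)\log\pi(P) + (1-\bar\pi(P))\log(1-\pi(P))$ is bounded above by $0$, Fatou's lemma gives upper semicontinuity of $R_{\mathbb{Q}}$, so $\pi^*$ is a maximizer. Uniqueness ($\mathbb{Q}_P$-a.s.) follows from strict concavity of $g \mapsto a\log g + (1-a)\log(1-g)$: if two maximizers differed on a set of positive measure, their (still nondecreasing) average would strictly increase $R_{\mathbb{Q}}$. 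For part (ii), the identity $\pi(p)=p$ is feasible with $R_{\mathbb{Q}}(\mathrm{id}) = 0$, so $D(\mathbb{Q}) = R_{\mathbb{Q}}(\pi^*) \ge 0$, and by uniqueness $D(\mathbb{Q}) = 0$ iff $\pi^* = \mathrm{id}$ $\mathbb{Q}_P$-a.s. If $\bar\pi = \mathrm{id}$ a.s., the unconstrained pointwise maximizer $\pi=\bar\pi$ is already monotone, whence $\pi^* = \mathrm{id}$ and $D(\mathbb{Q})=0$. Conversely, if $\pi^*=\mathrm{id}$, then since the identity lies in the interior of the monotone cone we may perturb it by any Lipschitz $h$, supported away from $\{0,1\}$ to tame the weight $1/(P(1-P))$, in both directions; first-order stationarity gives $\mathbb{E}_{\mathbb{Q}}[(\bar\pi(P)-P)\,h(P)/(P(1-P))]=0$ for all such $h$, forcing $\bar\pi(P)=P$ $\mathbb{Q}_P$-a.s., i.e.\ $\mathbb{Q}\in\mathcal{H}_{\text{HL},1}$.

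For part (iii) I would first apply Jensen's inequality to the concave logarithm,
\[
\log E_{\text{HL},n,\text{sym}} = \log\Big(\tfrac{1}{n!}\textstyle\sum_{\sigma \in S_n} E_{\text{HL},n}^{\sigma}\Big) \ge \tfrac{1}{n!}\sum_{\sigma \in S_n} \log E_{\text{HL},n}^{\sigma}.
\]
Splitting numerator and denominator in $\log E_{\text{HL},n}^{\sigma}$ shows that $\tfrac{1}{n!}\sum_\sigma \log E_{\text{HL},n}^{\sigma}$ equals the permutation-free baseline log score $\sum_{i=1}^n \log(P_i^{Y_i}(1-P_i)^{1-Y_i})$ subtracted from the expected cumulative log score of the online forward predictions $q_{\sigma,\sigma(i)}$ under a uniformly random ordering. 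Because the predictions \eqref{eq:isoreg_oos} are exactly the forward algorithm analysed by \citet{Kotlowski2017} in the random-permutation model, their regret bound, adapted to the logarithmic loss, gives for each fixed realization of $\{(P_i,Y_i)\}_{i=1}^n$
\[
\tfrac{1}{n!}\textstyle\sum_{\sigma}\sum_{i=1}^n \log\big(q_{\sigma,\sigma(i)}^{Y_{\sigma(i)}}(1-q_{\sigma,\sigma(i)})^{1-Y_{\sigma(i)}}\big) \ge \textstyle\sum_{i=1}^n \log\big(\hat\pi_n(P_i)^{Y_i}(1-\hat\pi_n(P_i))^{1-Y_i}\big) - C\sqrt{n\log(n)^2},
\]
where $\hat\pi_n$ is the in-hindsight empirical isotonic fit maximizing $\hat R_n$. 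Combining the two displays yields $\log E_{\text{HL},n,\text{sym}} \ge \hat R_n(\hat\pi_n) - C\sqrt{n\log(n)^2}$. Since $\pi^*$ is nondecreasing it is feasible for the finite-sample problem, so $\hat R_n(\hat\pi_n) \ge \hat R_n(\pi^*) = \sum_{i=1}^n \log((\pi^*(P_i)/P_i)^{Y_i}((1-\pi^*(P_i))/(1-P_i))^{1-Y_i})$; exponentiating gives the first claim. As this bound is deterministic in the data, taking $\mathbb{E}_{\mathbb{Q}}$ and using that the $(P_i,Y_i)$ are iid with $\mathbb{E}_{\mathbb{Q}}[\log((\pi^*(P)/P)^{Y}((1-\pi^*(P))/(1-P))^{1-Y})] = R_{\mathbb{Q}}(\pi^*) = D(\mathbb{Q})$ yields the second claim.

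The main obstacle is the regret step in part (iii): matching the two-sided construction \eqref{eq:isoreg_oos} to the estimator of \citet{Kotlowski2017} and transferring their random-permutation regret guarantee to the logarithmic loss. The logarithmic loss is unbounded as predictions approach $\{0,1\}$, and controlling this blow-up while retaining the comparison against the correct in-hindsight isotonic fit is what produces the extra logarithmic factor in $C\sqrt{n\log(n)^2}$ and demands the most care; here the integrability assumption \eqref{eq:integrability} and $\mathbb{P}(P\in\{0,1\})=0$ keep all scores finite. The arguments for (i) and (ii) are comparatively standard, the only delicate point being differentiation under the expectation against the singular weight $1/(P(1-P))$, which is handled by restricting the test perturbations $h$ to compact subsets of $(0,1)$.
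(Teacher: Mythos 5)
Your proof is correct, and for parts (i) and (iii) it follows essentially the paper's own route: Helly selection plus strict concavity for (i), and for (iii) the chain Jensen/AM--GM over permutations, the random-permutation regret bound of Kot{\l}owski et al.\ for the forward algorithm, and in-sample optimality of the empirical isotonic fit against $\pi^*$ (note only that no ``adaptation'' to logarithmic loss is needed: their Theorem~4.3 treats the entropic loss directly, which is exactly what the paper cites together with their Lemma~2.1 and the proof of their Theorem~4.1). The genuine difference is part (ii), the core of the theorem. The paper introduces $\tilde\pi$, the conditional expectation of the calibration curve $\bar\pi$ given the sigma-lattice generated by $P$ (Brunk's population isotonic regression), and proves $R_{\mathbb{Q}}(\tilde\pi)\ge 0$ with equality iff calibration holds, via a case analysis over intervals where $\tilde\pi$ is constant versus strictly increasing, hand-crafted monotone test functions, and a Fatou/dominated-convergence limit. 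You instead combine the Kullback--Leibler decomposition of $R_{\mathbb{Q}}$ with a variational argument at the identity: if $\mathrm{id}$ is optimal, the two-sided perturbations $\mathrm{id}+th$ with $h$ Lipschitz and compactly supported in $(0,1)$ are feasible for small $|t|$, and first-order stationarity gives $\mathbb{E}_{\mathbb{Q}}[(\bar\pi(P)-P)h(P)/(P(1-P))]=0$ for all such $h$, which annihilates the corresponding signed measure on every compact subset of $(0,1)$ and hence forces $\bar\pi(P)=P$ almost surely (condition \eqref{eq:integrability} already gives $P\in(0,1)$ almost surely). This argument is sound --- differentiation under the integral is harmless because the integrand vanishes off the support of $h$ and is bounded on it --- and it is more elementary than the paper's, avoiding Brunk's lattice machinery; it also does not actually require part (i), since $D(\mathbb{Q})=0$ already makes the identity a maximizer. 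What the paper's longer route buys is a constructive witness: under any miscalibration it exhibits the explicit isotonic function $\tilde\pi$ (the least-squares isotonic projection of $\bar\pi$) with $R_{\mathbb{Q}}(\tilde\pi)>0$, i.e.\ a concrete betting strategy with positive growth rate, rather than a proof by contradiction. Two small repairs to your write-up: the identity is not in the ``interior'' of the monotone cone in any standard function-space topology, so you should phrase that step purely in terms of the feasibility of the specific perturbations you use; and in the final expectation step one should record that $\mathbb{E}_{\mathbb{Q}}$ of the sum $\sum_{i=1}^n\log\,(\pi^*(P_i)/P_i)^{Y_i}((1-\pi^*(P_i))/(1-P_i))^{1-Y_i}$ is well defined and equals $nD(\mathbb{Q})$ because the positive part of each summand is dominated by $|\log P_i|+|\log(1-P_i)|$, which is integrable under \eqref{eq:integrability}.
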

	
	The integrability assumption \eqref{eq:integrability} is solely required to prove parts (i) and (ii) of the theorem, and the lower bound on $E_{\text{HL},n,\text{sym}}$ and the expectation of its logarithm in fact hold for any $\pi \in \mathcal{F}_{\uparrow, [0,1]}$. However, part (iii) only becomes useful in conjunction with (i) and (ii): the fact that $D(\mathbb{Q}) \geq 0$ with equality if and only if $\mathbb{Q} \in \mathcal{H}_{\text{HL},1}$ implies that the test has positive growth rate for all alternative distributions $\mathbb{Q}$ if $n$ is large enough. This is a surprising result, since it might seem that restricting our estimator of $\mathbb{E}_{\mathbb{Q}}[Y|P]$ to isotonic functions in $P$ implies some restriction on the class of alternatives against which the test has power --- which is not the case.
	
	Part (iii) of Theorem \ref{thm:power} only gives a diverging lower bound on the expected value of $\log(E_{\text{HL},n,\text{sym}})$. However, under assumption \eqref{eq:integrability} the average growth rate
	\[
	\frac{1}{n}\sum_{i=1}^{n}\log\, \left(\frac{\pi^*(P_i)}{P_i}\right)^{Y_i} \left(\frac{1-\pi^*(P_i)}{1-P_i}\right)^{1-Y_i}
	\]
	satisfies the strong law of large numbers, and since $D(\mathbb{Q}) > 0$ for $\mathbb{Q} \not\in \mathcal{H}_{\text{HL},1}$, this implies that $E_{\text{HL},n,\text{sym}} \rightarrow \infty$ almost surely as $n \rightarrow \infty$. In particular, for any Type-I error $\alpha$ and desired power $1-\beta$, there exists a sample size $N$ such that $\mathbb{Q}(E_{\text{HL},N,\text{sym}} \geq 1/\alpha) \geq 1-\beta$ for $N \geq n$.

	\begin{proof}[Proof of Theorem \ref{thm:power}]
		Part (i) is a consequence of the following facts. If $(\pi_n)_{n\in\mathbb{N}}$ is a sequence in $\mathcal{F}_{\uparrow, [0,1]}$ such that $\lim_{n\rightarrow\infty} R_{\mathbb{Q}}(\pi_n) = \sup_{\pi \in \mathcal{F}_{\uparrow, [0,1]}}R(\pi)$, then by Helly's selection theorem, there exists a subsequence $(\pi_{n_k})_{k\in\mathbb{N}}$ converging pointwise to some $\pi^* \in \mathcal{F}_{\uparrow, [0,1]}$. The function $\pi(P) \mapsto \log \left( (\pi(P)/P)^Y((1-\pi(P))/(1-P))^{1-Y}\right)$ inside the expectation in the definition of $R_{\mathbb{Q}}$ is strictly concave, and the set $\mathcal{F}_{\uparrow, [0,1]}$ is convex. Hence $R_{\mathbb{Q}}(\pi^*) \geq R_{\mathbb{Q}}(\pi)$ for all $\pi \in \mathcal{F}_{\uparrow, [0,1]}$, and equality holds if and only if $\pi = \pi^*$ $\mathbb{Q}$-almost-surely, provided that $R_{\mathbb{Q}}(\pi^*)$ is finite, which is shown below.
		
		The nonnegatitivy in part (ii) holds because $\mathcal{F}_{\uparrow, [0,1]}$ contains the identity function, and we only have to prove that $D(\mathbb{Q}) > 0$ if $\mathbb{Q} \not\in \mathcal{H}_{\text{HL},1}$. For this, it is sufficient to show that there exists some $\pi$ with $R_{\mathbb{Q}}(\pi) > 0$. We start with some results about the existence of certain expected values.
		Since $Y|P$ is Bernoulli with expectation $\bar{\pi}(P)$, we have
		\begin{align*}
			& \mathbb{E}_{\mathbb{Q}}[\log \, (\pi^*(P)/P)^Y((1-\pi^*(P))/(1-P))^{1-Y} \mid P] \\
			& \quad = \bar{\pi}(P)\log(\pi^*(P)/P) + (1-\bar{\pi}(P))\log((1-\pi^*(P))/(1-P))
		\end{align*}
		and the nonnegativity of the Kullback-Leibler divergence implies
		\begin{align*}
			& \bar{\pi}(P)\log(\pi^*(P)/P) + (1-\bar{\pi}(P))\log((1-\pi^*(P))/(1-P)) \\
			& \quad \leq \bar{\pi}(P)\log(\bar{\pi}(P)/P) + (1-\bar{\pi}(P))\log((1-\bar{\pi}(P))/(1-P)),
		\end{align*}
		hence we obtain
		\begin{align}
			0 & \leq \mathbb{E}_{\mathbb{Q}}[\log \, (\pi^*(P)/P)^Y((1-\pi^*(P))/(1-P))^{1-Y}] \nonumber\\
			& \leq \mathbb{E}_{\mathbb{Q}}[\bar{\pi}(P)\log(\bar{\pi}(P)/P) + (1-\bar{\pi}(P))\log((1-\bar{\pi}(P))/(1-P))] \nonumber \\
			& \leq \mathbb{E}_{\mathbb{Q}}[|\log(P)| + |\log(1-P)|] \nonumber \\
			& \leq \sqrt{\mathbb{E}_{\mathbb{Q}}[\log(P)^2]} + \sqrt{\mathbb{E}_{\mathbb{Q}}[\log(1-P)^2]} < \infty. \label{eq:finite_integral}
		\end{align}
		Let now $\tilde{\pi}(P)$ be a version of the conditional expectation of $\bar{\pi}(P)$ with respect to the sigma lattice generated by $P$, which $\tilde{\pi}(P)$ satisfies the following properties:
		\begin{align}
			& \tilde{\pi} \text{ is increasing }; \label{eq:expectation_lattice_incr} \\
			& \mathbb{E}_{\mathbb{Q}}[(\bar{\pi}(P) - \tilde{\pi}(P))h(P)] \leq 0 \text{ for all increasing } h \text{ such that } \mathbb{E}_{\mathbb{Q}}[h(P)^2] < \infty; \label{eq:expectation_lattice_ineq} \\
			& \mathbb{E}_{\mathbb{Q}}[\bar{\pi}(P)\one_B(P)] = \mathbb{E}_{\mathbb{Q}}[\tilde{\pi}(P)\one_B(P)] \text{ for all } B \text{ in the } \sigma\text{-field generated by } \tilde{\pi}. \label{eq:expectation_lattice_eq}
		\end{align}
		Equation \eqref{eq:expectation_lattice_incr} holds by definition of the conditional expectation given a sigma lattice, and \eqref{eq:expectation_lattice_ineq} and \eqref{eq:expectation_lattice_eq} are by \citet[Equations (3.9) and (3.11)]{Brunk1965}. By \eqref{eq:expectation_lattice_eq}, we have $\bar{\pi}(P) = P$ almost surely if and only if $\tilde{\pi}(P) = P$ almost surely. By definition of $\pi^*(P)$, we know that
		\[
		\mathbb{E}_{\mathbb{Q}}[\log \, (\tilde{\pi}(P)/P)^Y((1-\tilde{\pi}(P))/(1-P))^{1-Y}] \leq \mathbb{E}_{\mathbb{Q}}[\log \, (\pi^*(P)/P)^Y((1-\pi^*(P))/(1-P))^{1-Y}].
		\]
		The goal is now to prove
		\[
		\mathbb{E}_{\mathbb{Q}}\left[\log \left( (\tilde{\pi}(P)/P)^Y((1-\tilde{\pi}(P))/(1-P))^{1-Y}\right)\right] \geq 0,
		\]
		with equality if and only if $\tilde{\pi}(P) = P$ $\mathbb{Q}$-almost-surely. Notice that $\tilde{\pi}$ is only defined on the support of $P$, but one can assume without loss of generality that it is defined on the whole interval $[0,1]$ by right-continuous constant extrapolation in parts where it is not defined. Since $\tilde{\pi}$ is increasing, there exist at most countably many disjoint intervals $A_i \subseteq [0,1]$, $i \in \mathcal{I}$, on which $\tilde{\pi}$ is constant with some value $c_i \in [0,1]$. Furthermore, there are at most countably many disjoint intervals $B_j$, $j \in \mathcal{J}$, whose union equals $[0,1] \setminus \bigcup_{i \in \mathcal{I}} A_i$. We now make a few case distinctions.
		
		Fix $i$ with $c_i > 0$ and assume that $A_i = [a_i, b_i]$; the following arguments can be easily modified for the case that $A_i$ is (half-)open. Define the function
		\[
		h_i(x) = \begin{cases}
			\log(1/a_i) + 1, \ & \text{ if }x < a_i, \\
			\log(1/x), & \text{ if } x \in [a_i, b_i], \\
			-1, &\text{ if } x > b_i.
		\end{cases}
		\]
		Then $h_i(P)$ is square integrable due to \eqref{eq:finite_integral}, $h_i$ is decreasing, and constant outside of $[a_i, b_i]$, so that
		\begin{align*}
			0 \overset{\eqref{eq:expectation_lattice_ineq}}{\leq} \mathbb{E}_{\mathbb{Q}}[(\bar{\pi}(P) - \tilde{\pi}(P))h_i(P)] 
			& \overset{\eqref{eq:expectation_lattice_eq}}{=} \mathbb{E}_{\mathbb{Q}}[(\bar{\pi}(P) - \tilde{\pi}(P))\log(1/P)\one_{A_i}(P)] \\
			& \overset{\eqref{eq:expectation_lattice_eq}}{=} \mathbb{E}_{\mathbb{Q}}[(\bar{\pi}(P) - \tilde{\pi}(P))\log(c_i/P)\one_{A_i}(P)] \\
			& \, \,= \mathbb{E}_{\mathbb{Q}}[(\bar{\pi}(P) - \tilde{\pi}(P))\log(\tilde{\pi}(P)/P)\one_{A_i}(P)],
		\end{align*}
		where the last step is due to the fact that $\tilde{\pi}(P) = c_i$ for $P \in A_i$. Hence
		\[
		\mathbb{E}_{\mathbb{Q}}[\bar{\pi}(P)\log(\tilde{\pi}(P)/P)\one_{A_i}(P)] \geq 	\mathbb{E}_{\mathbb{Q}}[\tilde{\pi}(P)\log(\tilde{\pi}(P)/P)\one_{A_i}(P)]. 
		\]
		If $c_i = 0$, the above inequality is still true because \eqref{eq:expectation_lattice_eq} implies that $\tilde{\pi}(P) = \bar{\pi}(P) = 0$ for $P \in A_i$ in that case, and we define $0\log(0) := 0$.
		
		Similarly, for $c_i < 1$ we define
		\[
		h_i(x) = \begin{cases}
			\log(1/(1-b_i)) + 1, \ & \text{ if } x > b_i, \\
			\log(1/(1-x)), & \text{ if } x \in [a_i, b_i], \\
			-1, &\text{ if } x < a_i,
		\end{cases}
		\]
		which is square integrable and increasing. As before,
		\begin{align*}
			0 \overset{\eqref{eq:expectation_lattice_ineq}}{\leq} \mathbb{E}_{\mathbb{Q}}[(\tilde{\pi}(P) - \bar{\pi}(P))h_i(P)]
			& \overset{\eqref{eq:expectation_lattice_eq}}{=} \mathbb{E}_{\mathbb{Q}}[(\tilde{\pi}(P) - \bar{\pi}(P))\log(1/(1-P))\one_{A_i}(P)] \\
			& \overset{\eqref{eq:expectation_lattice_eq}}{=} \mathbb{E}_{\mathbb{Q}}[(\tilde{\pi}(P) - \bar{\pi}(P))\log((1-c_i)/(1-P))\one_{A_i}(P)] \\
			& \, \,= \mathbb{E}_{\mathbb{Q}}[(1-\bar{\pi}(P) - (1-\tilde{\pi}(P)))\log((1-\tilde{\pi}(P))/(1-P))\one_{A_i}(P)],
		\end{align*}
		so we obtain
		\[
		\mathbb{E}_{\mathbb{Q}}[(1-\bar{\pi}(P))\log((1-\tilde{\pi}(P))/(1-P))\one_{A_i}(P)] \geq 	\mathbb{E}_{\mathbb{Q}}[(1-\tilde{\pi}(P))\log((1-\tilde{\pi}(P))/(1-P))\one_{A_i}(P)], 
		\]
		which also holds if $\tilde{\pi}(P) = 1$ on $A_i$. Hence we have shown that
		\begin{align*}
			0 & \leq \mathbb{E}_{\mathbb{Q}}[\one_{A_i}(P)(\tilde{\pi}(P)\log(\tilde{\pi}(P)/P) + (1-\tilde{\pi}(P))\log((1-\tilde{\pi}(P))/(1-P)))] \\
			& \leq \mathbb{E}_{\mathbb{Q}}[\one_{A_i}(P)(\bar{\pi}(P)\log(\tilde{\pi}(P)/P) + (1-\bar{\pi}(P))\log((1-\tilde{\pi}(P))/(1-P)))] \\
			& = \mathbb{E}_{\mathbb{Q}}[\one_{A_i}(P)\log \, (\tilde{\pi}(P)/P)^Y((1-\tilde{\pi}(P))/(1-P))^{1-Y}],
		\end{align*}
		and equality holds if and only if $\tilde{\pi}(P) = P$ $\mathbb{Q}$-almost-surely on $A_i$, since the Kullback-Leibler divergence is non-negative.
		
		Consider now an interval $B_j$. Since $\tilde{\pi}$ is strictly increasing on $B_j$, the sigma field generated by $\tilde{\pi}$ contains all Borel sets which are subsets of $B_j$. Then \eqref{eq:expectation_lattice_eq} implies that $\tilde{\pi}(P) = \bar{\pi}(P)$ $\mathbb{Q}$-almost-surely on $B_j$, hence
		\begin{align*}
			& \mathbb{E}_{\mathbb{Q}}\left[\one_{B_j}(P)\left(\bar{\pi}(P)\log(\tilde{\pi}(P)/P) + (1-\bar{\pi}(P))\log((1-\tilde{\pi}(P))/(1-P))\right)\right] \\
			& \ =  \mathbb{E}_{\mathbb{Q}}\left[\one_{B_j}(P)\left(\tilde{\pi}(P)\log(\tilde{\pi}(P)/P) + (1-\tilde{\pi}(P))\log((1-\tilde{\pi}(P))/(1-P))\right)\right]\geq 0
		\end{align*}
		with equality if and only if $\tilde{\pi}(P) = P$ $\mathbb{Q}$-almost-surely on $B_j$.
		
		With the above derivations, we obtain that for any finite number of indices $i_1, \dots, i_n \in \mathcal{I}$, $j_1, \dots, j_n \in \mathcal{J}$ and 
		\[
		C_n = \left(\bigcup_{k=1}^n A_{i_k}\right) \cup \left(\bigcup_{l=1}^n B_{j_l}\right),
		\]
		the following inequalities hold,
		\begin{align}
			0 & \leq \mathbb{E}_{\mathbb{Q}}\left[\one_{C_n}(P)\left(\tilde{\pi}(P)\log(\tilde{\pi}(P)/P) + (1-\tilde{\pi}(P))\log((1-\tilde{\pi}(P))/(1-P))\right)\right] \label{eq:int_lower_bound}\\
			& \leq \mathbb{E}_{\mathbb{Q}}\left[\one_{C_n}(P)\left(\bar{\pi}(P)\log(\tilde{\pi}(P)/P) + (1-\bar{\pi}(P))\log((1-\tilde{\pi}(P))/(1-P))\right)\right] \label{eq:int_upper_bound}.
		\end{align}
		Since the integrand in \eqref{eq:int_lower_bound} is non-negative and the integrand in \eqref{eq:int_upper_bound} dominated pointwise by
		\[
		M(P) = \bar{\pi}(P)\log(\bar{\pi}(P)/P) + (1-\bar{\pi}(P))\log((1-\bar{\pi}(P))/(1-P))
		\]
		with $\mathbb{E}_{\mathbb{Q}}[M(P)] < \infty$, we can choose index sequences such that $\bigcup_{n=1}^N C_n$ increases to $[0,1]$, and apply Fatou's Lemma and the dominated convergence theorem to obtain
		\begin{align}
			0 & \leq \mathbb{E}_{\mathbb{Q}}\left[\tilde{\pi}(P)\log(\tilde{\pi}(P)/P) + (1-\tilde{\pi}(P))\log((1-\tilde{\pi}(P))/(1-P))\right] \label{eq:int_lower_bound2}\\
			& \leq \mathbb{E}_{\mathbb{Q}}\left[\bar{\pi}(P)\log(\tilde{\pi}(P)/P) + (1-\bar{\pi}(P))\log((1-\tilde{\pi}(P))/(1-P))\right] \nonumber.
		\end{align}
		Equality in \eqref{eq:int_lower_bound2} holds if and only if $\tilde{\pi}(P) = P$ almost surely.
		
		For part (iii), the inequality of arithmetic and geometric mean implies that 
		\[
		E \geq \left(\prod_{\sigma \in \mathcal{S}_n}\prod_{i=1}^{n} \frac{q_{\sigma,\sigma(i)}^{Y_{\sigma(i)}}(1-q_{\sigma,\sigma(i)})^{1-Y_{\sigma(i)}}}{P_{\sigma(i)}^{Y_{\sigma(i)}}(1-P_{\sigma(i)})^{1-Y_{\sigma(i)}}}\right)^{1/n!} = \exp\left(\frac{1}{n!}\sum_{\sigma \in \mathcal{S}_n} \sum_{i=1}^n \log\frac{q_{\sigma,\sigma(i)}^{Y_{\sigma(i)}}(1-q_{\sigma,\sigma(i)})^{1-Y_{\sigma(i)}}}{P_{\sigma(i)}^{Y_{\sigma(i)}}(1-P_{\sigma(i)})^{1-Y_{\sigma(i)}}}\right).
		\]
		The term inside the exponential can be written as
		\[
		L = \mathbb{E}_{\sigma}\left[\sum_{i=1}^n \log\frac{q_{\sigma,\sigma(i)}^{Y_{\sigma(i)}}(1-q_{\sigma,\sigma(i)})^{1-Y_{\sigma(i)}}}{P_{\sigma(i)}^{Y_{\sigma(i)}}(1-P_{\sigma(i)})^{1-Y_{\sigma(i)}}}\right],
		\]
		which is the negative of the entropic loss defined in Section 4.4 of \citet{Kotlowski2017}, and the expectation $\mathbb{E}_{\sigma}[\cdot]$ is with respect to the uniform distribution over all permutations $\sigma$ of $\{1,\dots,n\}$. It follows from Lemma 2.1, Theorem 4.3 and the proof of Theorem 4.1 of \citet{Kotlowski2017} that for all $K \in \mathbb{N}$,
		\[
		L - \sum_{i=1}^n \log\frac{\hat{\pi}_i^{Y_{i}}(1-\hat{\pi}_i)^{1-Y_{i}}}{P_{i}^{Y_{i}}(1-P_{\sigma(i)})^{1-Y_{i}}} \geq -\sum_{k=1}^n \left(\frac{2}{K}+\frac{4K}{k}\log(1+k)\right),
		\]
		where $\hat{\pi}_1, \dots, \hat{\pi}_n$ is the isotonic regression of $Y_1, \dots, Y_n$ on $P_1, \dots, P_n$, i.e.~the maximizer of 
		\[
		(g_1, \dots, g_n) \mapsto \hat{R}(g_1, \dots, g_n; \, P_1, \dots, P_n, Y_1, \dots, Y_n).
		\]
		as defined at \eqref{eq:isoreg}. The result now follows because
		\[
		\hat{R}(\hat{\pi}_1, \dots, \hat{\pi}_n; \, P_1, \dots, P_n, Y_1, \dots, Y_n) \geq \hat{R}(\pi^*(P_1), \dots, \pi^*(g_n); \, P_1, \dots, P_n, Y_1, \dots, Y_n)
		\]
		and $\sum_{k=1}^n (2/K + 4K\log(1+k)/k) = \mathcal{O}(\sqrt{n(\log(n))^2})$ for $K$ of order $\sqrt{n/(\log(n))^2}$.
	\end{proof}

	\subsection{A feasible version of the test} \label{sec:feasible}
	The ideal test described in Algorithm \ref{alg:esym} cannot be implemented for practically relevant $n$, as it requires to compute e-values over all $n!$ permutations of $\{1, \dots, n\}$. Even for a single permutation $\sigma$, the inner loop in Algorithm \ref{alg:esym} has computational complexity of $\mathcal{O}(n^2)$: it requires computing $2n$ isotonic regressions to generate out-of-sample predictions. We suggest to address these problems above by the simplified version in Algorithm \ref{alg:feasible}, which can be regarded as a version of the split likelihood ratio test by \citet{Wasserman2020}.
	
	\begin{algorithm}[tbh]
		\caption{Split LRT version of the e-value}\label{alg:feasible}
		\begin{algorithmic}[1]
			\State \textbf{Parameters:} split fraction $s \in (0,1)$, number of splits $B \in \mathbb{N}$.
			\State $E_{\text{HL},n} \gets 0$
			\For{$b = 1, \dots, B$}
			\State randomly select $\lfloor ns \rfloor$ pairs $(Y_{i}, P_{i})$, $j \in S_b = \{i_1, \dots, i_{\lfloor ns \rfloor}\}$, without replacement
			\State estimate the isotonic of regression of $(Y_{i}, P_{i})$, $i \in S_b$, by maximizing \eqref{eq:isoreg}
			\State generate predictions $q_i$ for $\mathbb{E}[Y|P_i]$, $i \in \{1, \dots, n\} \setminus S_b$, from the isotonic regression
			\State $E_{\text{HL},n} \leftarrow E_{\text{HL},n} + \prod_{i \in \{1, \dots, n\} \setminus S_b} (q_i/P_i)^{Y_i}((1-q_i)/(1-P_i))^{1-Y_i} /B$
			\EndFor
			\State \textbf{return} $E_{\text{HL},n}$
		\end{algorithmic}
	\end{algorithm}
	
	A delicate point in Algorithm \ref{alg:feasible} is Step 6, where one needs to generate out-of-sample predictions from the isotonic regression fit. Naive extrapolation approaches could lead to predicted probabilities $q_i \in \{0,1\}$ and hence an e-value of zero if either $q_i = 0$ and $Y_i = 1$ or $q_i = 1$ and $Y_i = 0$. 
	
	Let $p_1 < \dots < p_m$ denote the distinct values of $P_i$, $i \in S_b$, and $\hat{\pi}_1 \leq \dots \leq \hat{\pi}_m$ the corresponding values of the isotonic regression. A well known result about isotonic regression states that there exists a partition of $S_b$ into index sets $\mathcal{I}_1, \dots, \mathcal{I}_d$ such that $\hat{\pi}_j$ is the empirical mean of the $Y_i$ with indices in $\mathcal{I}_j$,
	\[
	\hat{\pi}_j = \frac{1}{\#\mathcal{I}_j}\sum_{i \in \mathcal{I}_j} Y_i.
	\]
	To remedy the problem of predictions in $\{0,1\}$, we propose to apply the smoothed Laplace predictor, equivalent to Jeffreys' prior in binomial proportion estimation,
	\[
	\check{\pi}_j = \frac{1}{\# \mathcal{I}_j + 1} \left(0.5 + \sum_{i \in \mathcal{I}_j}Y_i \right) \in (0,1).
	\]
	For out-of-sample predictions at $P_i \not\in \{p_1, \dots, p_m\}$, one can then apply linear interpolation
	\[
	q_i = \begin{cases} \displaystyle
		\frac{p_l - P_i}{p_l - p_k} \check{\pi}_k + \frac{P_i - p_k}{p_l - p_k} \check{\pi}_l, & \ \text{ if } P_i \in [p_k, p_l], \\[0.2em]
		\check{\pi}_1,  & \ \text{ if } P_i < p_1, \\[0.2em]
		\check{\pi}_m,  & \ \text{ if } P_i > p_m., \\
	\end{cases}
	\]
	where it is now guaranteed that $q_i \in (0,1)$.

	\section{Simulations}
	\label{sec:simulation}
	
	This section evaluates the empirical performance of the feasible version of the proposed test in Section \ref{sec:feasible} together with sensible values of the splitting fraction $s \in (0,1)$.
	We follow the simulation setup of \citet{Hosmeretal1997} with a quadratic misspecification in assessing HL-type tests, which is, if at all, just slightly modified in more recent contributions \citep{HosmerHjort2002, XieEtAl2008, Allison2014, CanaryEtAl2017, Nattino2020}.
	Replication material for the simulations and the application in Section \ref{sec:application} in the statistical software \texttt{R} is available under \href{https://github.com/marius-cp/eHL}{https://github.com/marius-cp/eHL}.
	
	For $i=1,\dots,2n$ with $n \in \{1024, 2048, 4096, 8192\}$, we simulate the iid covariate $X_i \stackrel{\text{iid}}{\sim} \text{U}(-3,3)$ and let the response variables $Y_i \sim \text{Bernoulli}(\pi_i)$ be independent, where the true conditional event probability $\bar{\pi}_i$ follows a logistic transformation of a quadratic model
	\begin{align} 
		\label{Eq:true}
		\bar{\pi}_i =  \bar{\pi} (X_i) = \mathbb{P} (Y_i=1 \mid X_i; \, \beta_0, \beta_1, \beta_2) = \dfrac{\exp( \beta_0 + \beta_1 X_i + \beta_2 X_i^2)}{1+\exp( \beta_0 + \beta_1 X_i + \beta_2 X_i^2)}.
	\end{align} 
	
	We split the simulated data into an estimation set and validation set, both of size $n$. 
	Based on the data in the estimation set, we estimate the parameters of a linear, and hence misspecified, logistic regression model by maximum likelihood and denote the parameter estimates by $\big(\widehat{\beta}_0, \widehat{\beta}_1\big)$.
	The probability of a positive outcome is then predicted by 
	\begin{align}
		\label{Eq:pred}
		P_i = \dfrac{\exp \big( \widehat{\beta}_0 + \widehat{\beta}_1 X_i \big)}  {1+\exp( \widehat{\beta}_0 + \widehat{\beta}_1 X_i )}.
	\end{align}  
	
	We vary the severity of the misspecification, expressed through the magnitude of $\beta_2$.
	Following \citet{Hosmeretal1997}, we characterize the ``lack of linearity'' through the conditions $\bar{\pi}(-3) = j - 0.00733745$, $\bar{\pi}(-1.5) = 0.05$ and $\bar{\pi}(3)= 0.95$ such that the value $j = 0$ results in the very accurate approximation $\beta_2 \approx 0$, i.e., a linear effect of $X_i$ on the log odds-ratio. 
	We consider a sequence of 51 equally spaced values of $j$ in the interval $[0, 0.1]$. 
	Notice that for each choice of $j$, the values of $\beta_0$ and $\beta_1$ are also determined by these conditions.

	\begin{table}[tb]
		\caption{Rejection rates in percentage points of the classical HL test and our eHL test under the null hypothesis with $j=0$ and the true regression parameters $(\beta_0, \beta_1)$ in \eqref{Eq:pred} at a significance level of $5\%$. 
			We treat an e-value above 20 as a rejection in the eHL test.}
	\label{tab:HLsize}
	\centering
	\begin{tabular}[t]{c c r c rrr} 
	\toprule
	&& HL && \multicolumn{3}{c}{eHL} \\ 
	\cmidrule(l{3pt}r{3pt}){3-3} 
	\cmidrule(l{3pt}r{3pt}){5-7}
	$s$ && && $1/3$ & $1/2$ & $2/3$ \\ 
	\midrule
	$n=1024$ && 6.2 && 0.5 & 1.0 & 0.4  \\ 
	$n=2048$ && 5.0 &&  0.1 & 0.4 & 0.6  \\ 
	$n=4096$ && 4.7 && 0.2 & 0.4 & 0.6  \\ 
	$n=8192$ && 4.5 && 0.0 & 0.1 & 0.5 \\ 
	\bottomrule
\end{tabular}
\end{table}

Table \ref{tab:HLsize} reports rejection rates of the tests over 1000 simulation replications, where we set $\beta_2 = 0$ (i.e., $j=0$), and use the true regression parameters $(\beta_0, \beta_1)$ in \eqref{Eq:pred} to guarantee that the null hypothesis $\mathcal{H}_{\text{HL},n}$ holds.
For the classical HL test, we use ten equally populated (quantile-spaced) bins, where the exact procedure follows the method \textsf{Q}{$^R$} described in Appendix \ref{sec:HLInstability}.
For the feasible eHL test of Section \ref{sec:feasible}, we use the splitting fractions $s \in \{1/3, 1/2, 2/3\}$.
To limit computation time, we choose a relatively low amount of bootstrap replications $B=10$ in the eHL test as we are mainly interested in rejection rates averaged across simulation replications, and hence, stability of the test is less of a concern as e.g., in the subsequent empirical application.
Here and in the following, we treat e-values above 20 as a test rejection at the $5\%$ significance level.
The table shows that all tests are well sized, where all eHL versions exhibit rejection frequencies much below the nominal value of $5\%$, which is not unusual for tests based on e-values.

\begin{figure}[tb]
\includegraphics[width=\linewidth]{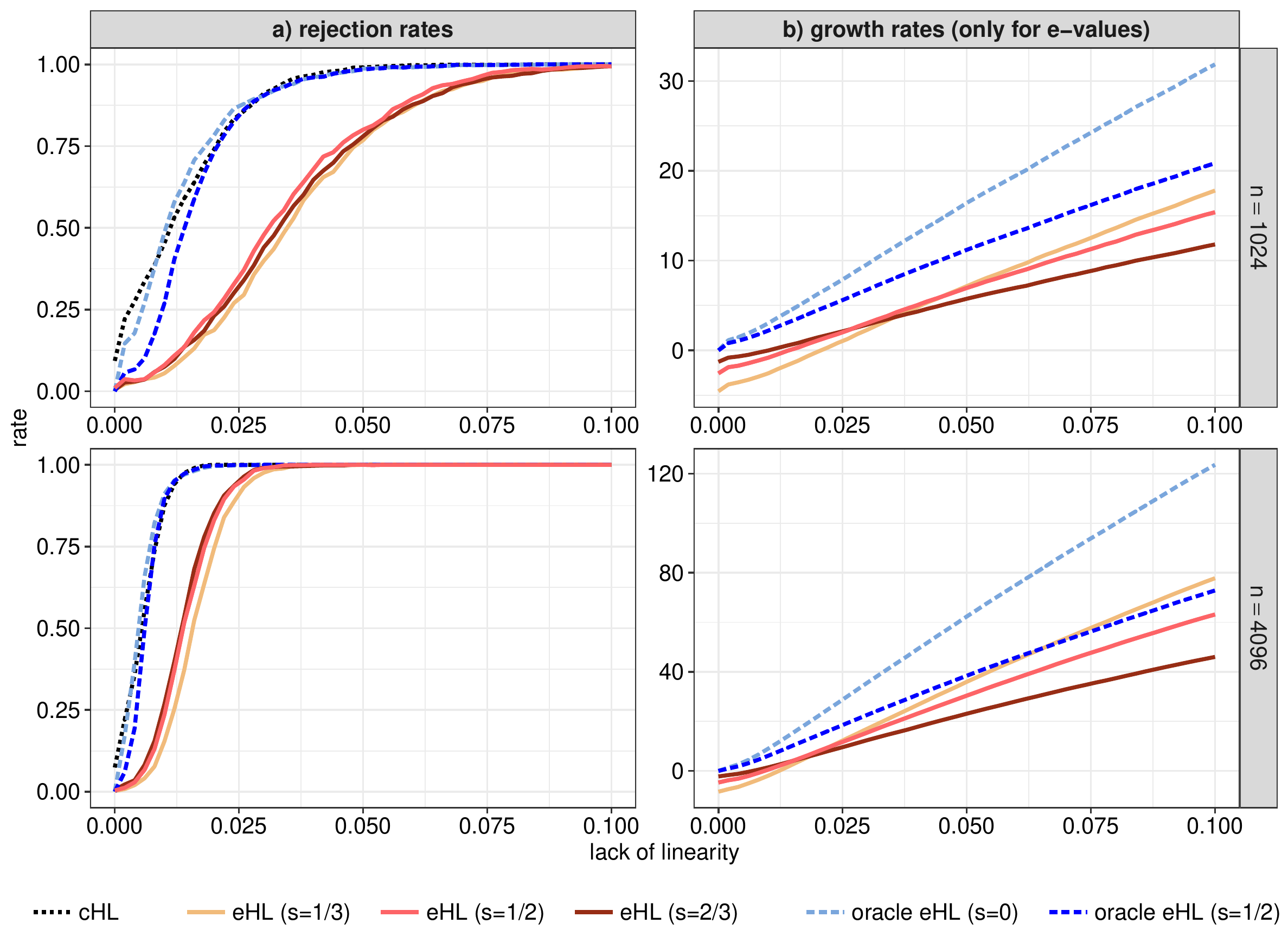}
\caption{Rejection (left) and e-value growth (right) rates for the classical HL (cHL) test, the feasible eHL test and an oracle eHL test for a range of splitting factors $s$.
	The oracle eHL test is based on the true $\pi_i$.
	The $x$-axis contains the severity of model mispecification, and the vertically aligned plots correspond to different sample sizes.} 
\label{fig:Rejection_Rates}
\end{figure}

Figure \ref{fig:Rejection_Rates} analyzes the tests' behaviour under the alternative hypotheses induced by $j>0$.
Notice that we use the true parameters $(\beta_0, \beta_1)$ in \eqref{Eq:pred} for $j=0$ but estimates $\big(\widehat{\beta}_0, \widehat{\beta}_1\big)$ for any $j>0$ as the pseudo-true parameters are unknown under model mispecification. In this analysis, we further include an oracle version of the eHL test, whose e-values are optimal in the sense that they are based on $q_i = \bar{\pi}_i$, i.e., the practically unknown true conditional event probabilities.
The oracle eHL version with $s=1/2$ facilitates a fair comparison with the feasible HL test based on the same splitting factor.
The left panel of the figure shows classical test rejection rates for a nominal significance level of $5\%$.
Following the explanations in Section \ref{sec:IdealTest} together with \citet{GrunwaldHeideETAL2019} and \citet{Shafer2021}, a suitable measure of power for e-values is the growth rate $\mathbb{E} \big[ \log(E_{\text{HL},n}) \big]$, which is shown in the right panel Figure \ref{fig:Rejection_Rates}, where we approximate the expectation by the average log e-value over the simulation replications.
We restrict attention to $n \in \{1024, 4096\}$ as the other sample sizes do not yield further insights.

We find that all tests develop power for increasing mispecification $j$. 
E.g., the feasible eHL versions already have substantial power for both sample sizes against an alternative with $j \approx 0.043$, which \citet{Hosmeretal1997} interpret as a value inducing only `slight' mispecification. (Notice that $j \approx 0.043$ equals 0.05 in their parametrization.)
There seems to be little difference among the feasible eHL tests when using different splitting fractions $s$, and hence, we do not find arguments to deviate from the natural choice of $s=1/2$, which we continue to use in the application.

The higher power of the classical HL test can be explained by the required sample split in the eHL test, and the estimation error in assigning suitable values for $q_i$.
The two oracle eHL tests make these steps redundant and hence achieve comparable power to the classical HL test.
Perhaps surprisingly, the difference between the two oracle eHL tests with different $s$ is smaller than the respective difference to the feasible test versions based on estimated $q_i$'s, which means that tuning the test to a specific alternative through the $q_i$'s is the main empirical challenge of the eHL test.

Notice that the often overlooked bin specification in the classical HL test implicitly determines the set of alternatives the test has power against as e.g.\ illustrated in \citet[Section 5]{Dimitriadis2022Honest}. As the sample split in the eHL test allows for \emph{estimating} a suitable alternative, Theorem \ref{thm:power} shows that the (ideal version of the) eHL test has power against all alternatives.

Turning to the growth rates of the feasible eHL tests, we find that larger choices of $s$ perform better for slight model mispecifications (small $j$) while the opposite is true for large mispecifications.
This can be explained since as discussed around \eqref{eq:alternative_q<P}--\eqref{eq:alternative_q>P}, $\bar{\pi}_i$ must be on the `correct' side of $P_i$ to gain power, which might be violated for small $s$ (and $n$) under slight mispecifications.


\section{Application: Credit Card Defaults in Taiwan}
\label{sec:application}


In this application, we analyze (re-)calibration of probability predictions for the binary event of credit card defaults in Taiwan in the year 2005.
In that time period, banks in Taiwan over-issued credit cards, also to unqualified applicants, who at the same time overused the cards for consumption, resulting in severe credit card debts  and damaged consumer finance confidence \citep{YehLien2009, LoHarvey2011}.
This crisis calls for improved and in particular calibrated probability predictions for credit card defaults that can be used for a thorough risk management and improved financial regulations.

For our analysis, we use a data set of $m = 30\,000$ credit card holders from Taiwan in 2005, that is publicly available from the UCI Machine Learning Repository \citep{UCIMLRep, YehLien2009} under \href{https://archive.ics.uci.edu/ml/datasets/default+of+credit+card+clients}{https://archive.ics.uci.edu/ml/datasets/default+of+credit+card+clients}.
Specifically, the binary response variable $Y_i \in \{0,1\}$ contains information on whether a default payment, $Y_i = 1$, occurred for customer $i=1,\dots,m$.
We observe a relatively high rate of $22.12\%$ of default payments in the data set that reflects the above mentioned credit card crisis.
The data set further includes 23 explanatory variables, containing information on the amount of given credit, gender, education, marital status, age, and various historical payment records for the past six months.

\begin{table}[tb]
\centering
\caption{E-values of the eHL and the range of p-values of the classical HL test, the latter stemming from 80 reasonable binning procedures as detailed in Table \ref{tab:HLinstabilities} and Appendix \ref{sec:HLInstability}.} 
\label{tab:application_results}
\begin{tabular}[t]{l l l}
	\toprule
	Prediction method & eHL e-values  & Range of HL p-values \\
	\midrule
	Logistic model & \hspace{0.5em}$7.0 \cdot 10^{28}$ & [0.00, 0.00] \\
	Logistic model with increased estimation set & \hspace{0.5em}$9.6 \cdot 10^{22}$ & [0.00, 0.00] \\
	Isotonic recalibration &  20.04 & [0.00, 0.91] \\
	Bagged isotonic recalibration &   \hspace{0.5em}6.14 & [0.00, 0.53] \\
	\bottomrule
\end{tabular}
\end{table} 

We randomly split the data into an estimation and a Recalibration set $\mathcal{R}$ with $M=12000$ observations each, and a Validation set $\mathcal{V}$ containing the remaining $n=6000$ observations.
We use the estimation set to fit a standard logistic regression model based on all predictor variables by maximum likelihood and compute the model predictions on the recalibration and validation sets, respectively.
We run all the following tests on the validation set.

Table \ref{tab:application_results} reports the e-values of the feasible version of our calibration test described in Section \ref{sec:feasible} based on $B=10000$ bootstrap replication and with a splitting factor of $s=1/2$ that is motivated by our simulation results.
We further report the range between the smallest and largest p-value of the classical HL test, where the different p-values result from five different, but natural binning procedures using $g=5,\dots,20$ bins, respectively.
We provide further details on these implementation choices in Appendix \ref{sec:HLInstability}.

The predictions from the logistic model result in an e-value far beyond the value of 20 in Table \ref{tab:application_results}, hence implying that these predictions are clearly miscalibrated.
In this setting, all implementation choices of the classical HL test agree and deliver p-values very close to zero. 
The second row of the table shows that even when using all observations in the ``increased estimation set'' comprising the estimation and the recalibration set, the situation barely changes and both the eHL and HL tests agree (under all implementation choices).

As a consequence of these clear rejections, we now aim at isotonically recalibrating the probability predictions, a technique that proved valuable in other disciplines \citep{Guo2017, Vannitsem2018}, where it is also called ``post-processing''.
For this, we estimate an isotonic regression on the recalibration set $\mathcal{R}$ and generate recalibrated predictions by transforming the logistic predictions on $\mathcal{V}$ with the estimated isotonic regression function.
Table \ref{tab:application_results} shows that our calibration test has an e-value just above 20, i.e., a weak rejection when interpreted as a (conservative) test at the $5\%$ level.

\begin{figure}[tb]
\centering
\includegraphics[width=\linewidth]{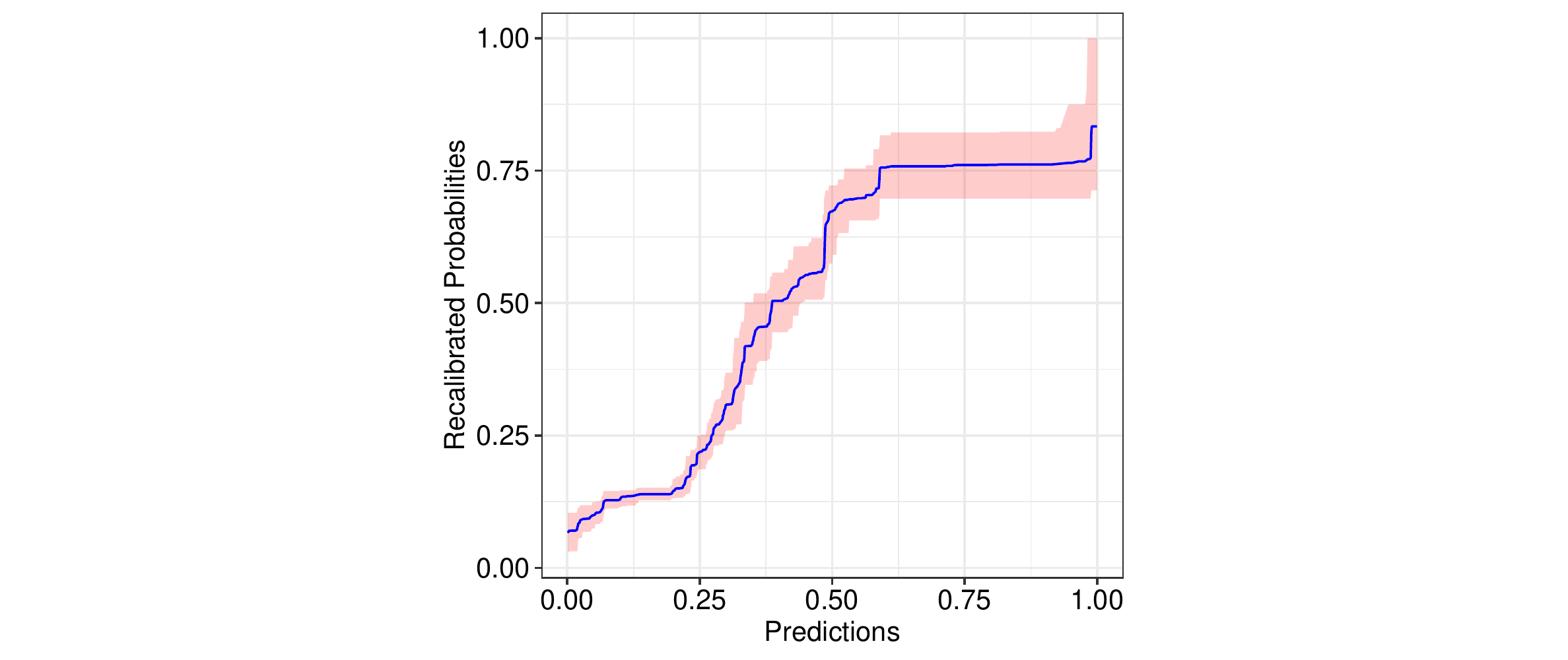}
\caption{Bagged isotonic recalibration curve of the logit predictions. 
	The blue curve shows the mean, and the red band the range of the pointwise $1\%$ and $99\%$ quantiles, over all bagging iterations.} 
\label{fig:recalibration_curve}
\end{figure}

As a nonparametric method, the isotonic regression is known to involve substantial estimation noise that might adversely affect the recalibrated predictions.
Hence, we stabilize the estimation through the classical bagging (bootstrap aggregation) method of \citet{Breiman1996bagging}.
In detail, we draw $\widetilde{B} = 100$ bootstrap samples $\mathcal{R}_b, b=1,\dots,\widetilde{B}$  of size $M$ from the recalibration set $\mathcal{R}$ and estimate the isotonic regression on each  bootstrap sample $\mathcal{R}_b$.
The final predictions are obtained by recalibrating with the average of the estimated isotonic regression functions, displayed in Figure \ref{fig:recalibration_curve}.

The last row of Table \ref{tab:application_results} shows an e-value of approximately 6 implying only very weak evidence against the null hypothesis of calibration, once again illustrating the practical strength of both, bagging and recalibration methods.
The estimated re-calibration function displayed in Figure \ref{fig:recalibration_curve} reinforces the importance of recalibrating the logistic model predictions by showing that it substantially deviates from the diagonal. 

\begin{figure}[tb]
\centering
\includegraphics[width=\linewidth]{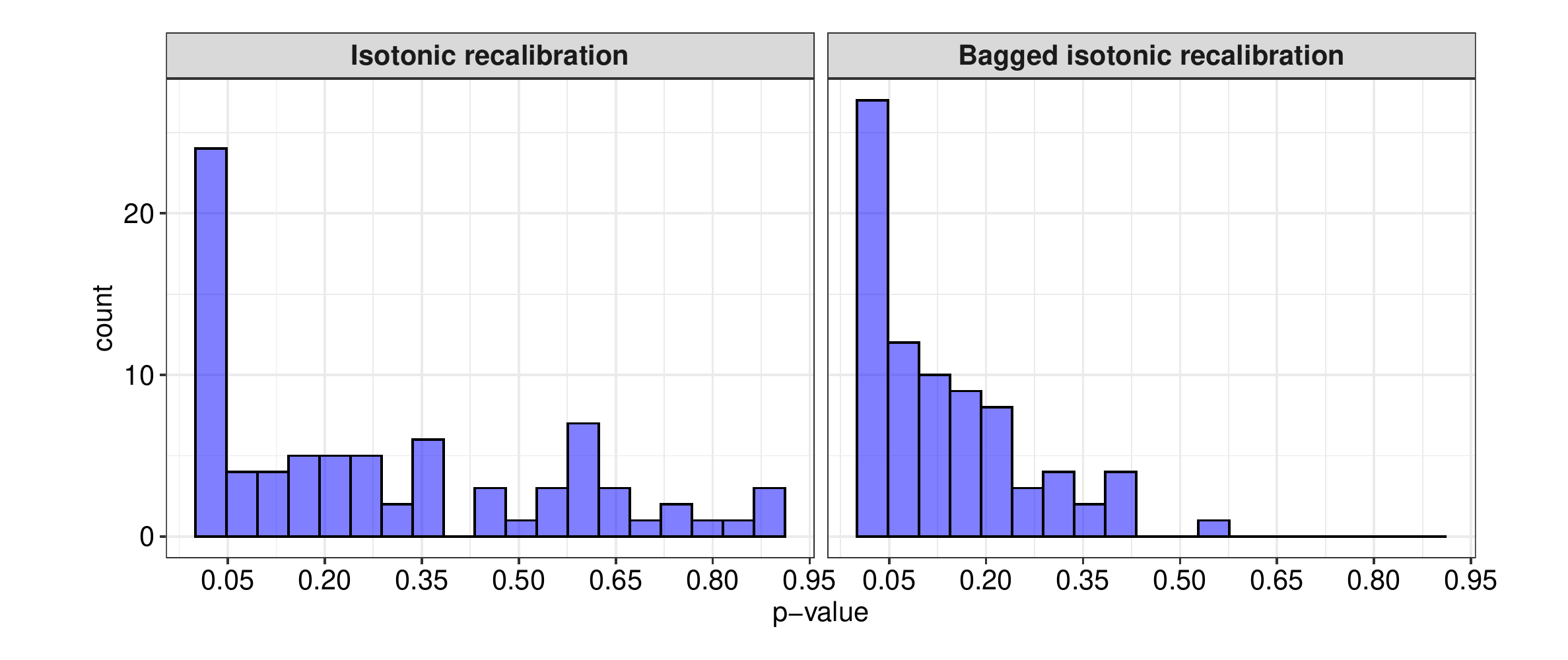}
\caption{Histograms of $p$-values of the classical out-of-sample HL test based the five binning procedures given in Appendix \ref{sec:HLInstability} based on 5-20 bins, respectively, resulting in a total of 80 test results.} 
\label{fig:HLinstabilities}
\end{figure}

For these two recalibration methods, the various natural implementation choices of the HL test, further described in Appendix \ref{sec:HLInstability}, result in p-values ranging between essentially 0 and 0.91 (and 0.53 respectively).
The corresponding p-value histograms in Figure \ref{fig:HLinstabilities} (and the detailed results in Table \ref{tab:HLinstabilities}) show the continuum of p-values, where the null hypothesis is rejected in approximately half of the cases at the $5\%$ level, implying that a researcher can essentially tailor the test decision to her will.
As already noted by \citet{Hosmeretal1997, Bertolini2000, Kuss2002}, this is a disconcerting state of affairs for a commonly used testing procedure and calls for more robust alternatives, such as the eHL test proposed in this paper.
Appendix \ref{sec:HLInstability} further shows that the feasible eHL test version is affected less by such instabilities arising from the repeated sample splits, at least if $B$ is chosen sufficiently large as in this application.

\section{Discussion \label{sec:discussion}}

This article proposes an e-test for perfect calibration, which is a safe testing counterpart to the widely used Hosmer-Lemeshow test. The proposed eHL test follows a simple betting interpretation (see \cite{Shafer2021}) where the e-value can be seen as the factor by which we multiply the bet against the hypothesis of perfect calibration. Intuitively, when accumulating money by the bet, we gain evidence against the null. Here, the e-value depends on the probability prediction, its corresponding realization, and an arbitrary value, which we suggest estimating in a two-step approach by isotonic regression. Furthermore, we assess the empirical performance of the test to detect quadratic model misspecifications. The simulations show that in samples of more than 2000 observations, the eHL test allows to reliably detect levels of quadratic misspecification, which \citet[p.~973]{Hosmeretal1997} denote to be slight. The intrinsic flexibility of the e-values allows the application of stable data-driven methods (here isotonic regression) instead of the typical binning and counting technique in the HL test. However, this flexibility comes at the cost of lower power in small samples of less than 2000 observations. 

The feasible version of our test is based on random splits of the training data. Since the null hypothesis $\mathcal{H}_{\text{HL},n}$ requires calibration conditional on the predictions $P_1, \dots, P_n$, one also obtains a valid tests if the splits are performed systematically based on $P_1, \dots, P_n$, for example, by choosing two subsets with similar distribution of the $P_i$ in order for the isotonic regression estimator to extrapolate well. Systematic sampling approaches to increase the power have already been applied by \citet{Duan2022} for testing treatment effects.

Our article focuses on the batch setting where a fixed sample of size $n$ is available, rather than the online setting in which $(P_i, Y_i)$, $i \in \mathbb{N}$, arrive sequentially. However, the fact that powerful tests based on isotonic regression can be constructed in the batch setting suggests that similar approaches may be fruitful for online testing. \citet[Section 7.1]{Kotlowski2016} describe algorithms with sublinear regret for online isotonic regression (without the random permutation setting). We believe that in conjunction with parts (i) and (ii) of our Theorem \ref{thm:power}, it is possible to derive power guarantees for sequential calibration tests where $\mathbb{E}[Y|P]$ is estimated sequentially with isotonic regression. We leave such extensions for future work.

\section*{Acknowledgments}

A.~Henzi and J.~Ziegel gratefully acknowledge financial support from the Swiss National Science Foundation.
T.~Dimitriadis gratefully acknowledges financial support from the German Research Foundation (DFG) through grant number 502572912.


\bibliographystyle{apalike}
\bibliography{biblio}




\vspace{1cm}
\appendix 
\noindent {\huge \textbf{Appendix}}

\section{(In-)Stability Results for the HL and eHL Tests}
\label{sec:HLInstability}

The classical HL test given in \eqref{eq:HLTest} is based on a partition of the unit interval into $g \in \mathbb{N}$ bins. 
We use the subsequently described five partitioning methods in the application in Section \ref{sec:application}, starting with the equidistant variant:
\begin{itemize}
\item   
$\textsf{E}$:
We partition the interval $\big[\min(p_i; i=1,\dots,n), \, \max(p_i; i=1,\dots,n) \big]$ into $g$ equidistant bins that are, apart from the first bin, open at left and closed at right.
\end{itemize}

We further use four natural implementations of ``quantile-based'' binning, all using a nominal number of $g$ bins.
These methods mainly differ for multiple occurrences of the same forecast value, which is however not unusual in practice and is e.g., an inherent feature of methods based on decision trees or isotonic regressions.

\begin{itemize}
\item 
$\textsf{Q}^L$: 
We partition the interval $[0,1]$ into $g$ left-open and right-closed bins according to the sample quantiles (using the default \texttt{quantile()} function in \texttt{R}) at levels $1/g,\dots,(g-1)/g$.
This method is denoted with the superscript $L$ as forecasts on the bin boundary are assigned to the Left bin.
The first bin is also closed at left and if the sample quantiles at different levels coincide, they are ignored, resulting in possibly less than $g$ bins.

\item 
$\textsf{Q}^R$: 
As $\textsf{Q}^L$, but we use $g$ right-open and left-closed bins such as forecasts on the bin boundaries are assigned to the Right bin.

\item 
$\textsf{Q}^+$:
We sort the forecast-realization pairs $(p_i, y_i)_{i=1}^n$ by their forecast values $p_i$ and in the case of tied forecast values, by their realizations in \emph{ascending} order.
Based on this order, we place the observations in $g$ equally populated bins.
If the size of the data set is not a multiple of $g$, excess values are redistributed in such a way that the bins with an additional observation are as far apart from each other as possible.

\item 
$\textsf{Q}^-$:
As variant $\textsf{Q}^+$, except that we sort in \textit{descending} order of $y_i$ for tied forecast values.
\end{itemize}

A comparison of the methods $\textsf{Q}^L$ and $\textsf{Q}^R$ illustrates that assigning predictions on the bin boundaries either to the left or right bins can have consequential implications.
The methods $\textsf{Q}^+$ and $\textsf{Q}^-$ circumvent this issue by selecting approximately equal amounts of observations into each bin, but in turn are sensitive to a change in the simple ordering of the (iid) observations in the underlying data, something that is usually ignored in applications.

\begin{table}[tb]
\centering
\caption{p-values of the HL test based on various binning choices described in the text for the two recalibrated prediction methods from Section \ref{sec:application}.}
\label{tab:HLinstabilities}
\begin{tabular}{r c ccccc c ccccc}
	\toprule
	\multicolumn{1}{c}{} && \multicolumn{5}{c}{Isotonic recalibration} && \multicolumn{5}{c}{Bagged isotonic recalibration} \\
	\cmidrule(l{3pt}r{3pt}){3-7} \cmidrule(l{3pt}r{3pt}){9-13}
	Bins && $\textsf{Q}^L$ & $\textsf{Q}^R$ & $\textsf{Q}^{+}$ & $\textsf{Q}^-$ & \textsf{E} &&$\textsf{Q}^L$ & $\textsf{Q}^R$ & $\textsf{Q}^{+}$ & $\textsf{Q}^-$ & \textsf{E} \\
	\midrule
	5 && 0.34 & 0.46 & 0.09 & 0.00 & 0.24 && 0.08 & 0.05 & 0.09 & 0.00 & 0.11\\
	6 && 0.16 & 0.60 & 0.22 & 0.00 & 0.31 && 0.10 & 0.21 & 0.18 & 0.26 & 0.33\\
	7 && 0.24 & 0.56 & 0.01 & 0.00 & 0.00 && 0.02 & 0.16 & 0.01 & 0.00 & 0.37\\
	8 && 0.59 & 0.55 & 0.17 & 0.00 & 0.38 && 0.11 & 0.20 & 0.17 & 0.02 & 0.15\\
	9 && 0.20 & 0.53 & 0.06 & 0.00 & 0.02 && 0.08 & 0.20 & 0.07 & 0.00 & 0.26\\
	10 && 0.26 & 0.67 & 0.19 & 0.00 & 0.36 && 0.20 & 0.11 & 0.22 & 0.00 & 0.10\\
	11 && 0.27 & 0.33 & 0.08 & 0.00 & 0.77 && 0.06 & 0.09 & 0.10 & 0.00 & 0.08\\
	12 && 0.15 & 0.91 & 0.19 & 0.00 & 0.02 && 0.10 & 0.18 & 0.21 & 0.01 & 0.11\\
	13 && 0.57 & 0.58 & 0.27 & 0.00 & 0.60 && 0.16 & 0.17 & 0.31 & 0.00 & 0.00\\
	14 && 0.22 & 0.87 & 0.01 & 0.00 & 0.09 && 0.03 & 0.07 & 0.01 & 0.00 & 0.03\\
	15 && 0.60 & 0.68 & 0.04 & 0.00 & 0.64 && 0.04 & 0.09 & 0.06 & 0.00 & 0.00\\
	16 && 0.80 & 0.28 & 0.17 & 0.00 & 0.11 && 0.29 & 0.37 & 0.20 & 0.00 & 0.01\\
	17 && 0.86 & 0.45 & 0.11 & 0.00 & 0.02 && 0.25 & 0.07 & 0.14 & 0.00 & 0.00\\
	18 && 0.36 & 0.63 & 0.14 & 0.00 & 0.10 && 0.19 & 0.19 & 0.17 & 0.00 & 0.00\\
	19 && 0.48 & 0.73 & 0.38 & 0.00 & 0.01 && 0.40 & 0.53 & 0.42 & 0.00 & 0.10\\
	20 && 0.59 & 0.83 & 0.35 & 0.00 & 0.61 && 0.42 & 0.30 & 0.39 & 0.00 & 0.02\\
	\bottomrule
\end{tabular}
\end{table}

While the existing literature often simply refers to ``quantile-based'' binning, this list shows that the HL test is sensitive to subtleties that one might easily disregard, but turn out to be consequential for the test result in some instances.
This is illustrated by Table \ref{tab:HLinstabilities}, which reports $p$-values for the classical HL test based on the five binning methods discussed above using $g=5,\dots,20$ bins respectively for the two recalibration methods used in the application in Section \ref{sec:application}.
We find that the p-values vary substantially in both, using different numbers of bins and different binning implementations. 
Maybe surprisingly, even for a fixed $g$, the subtleties in the four quantile-based binning choices lead to widely varying p-values.

\begin{figure}[tb]
\centering
\includegraphics[width=\linewidth]{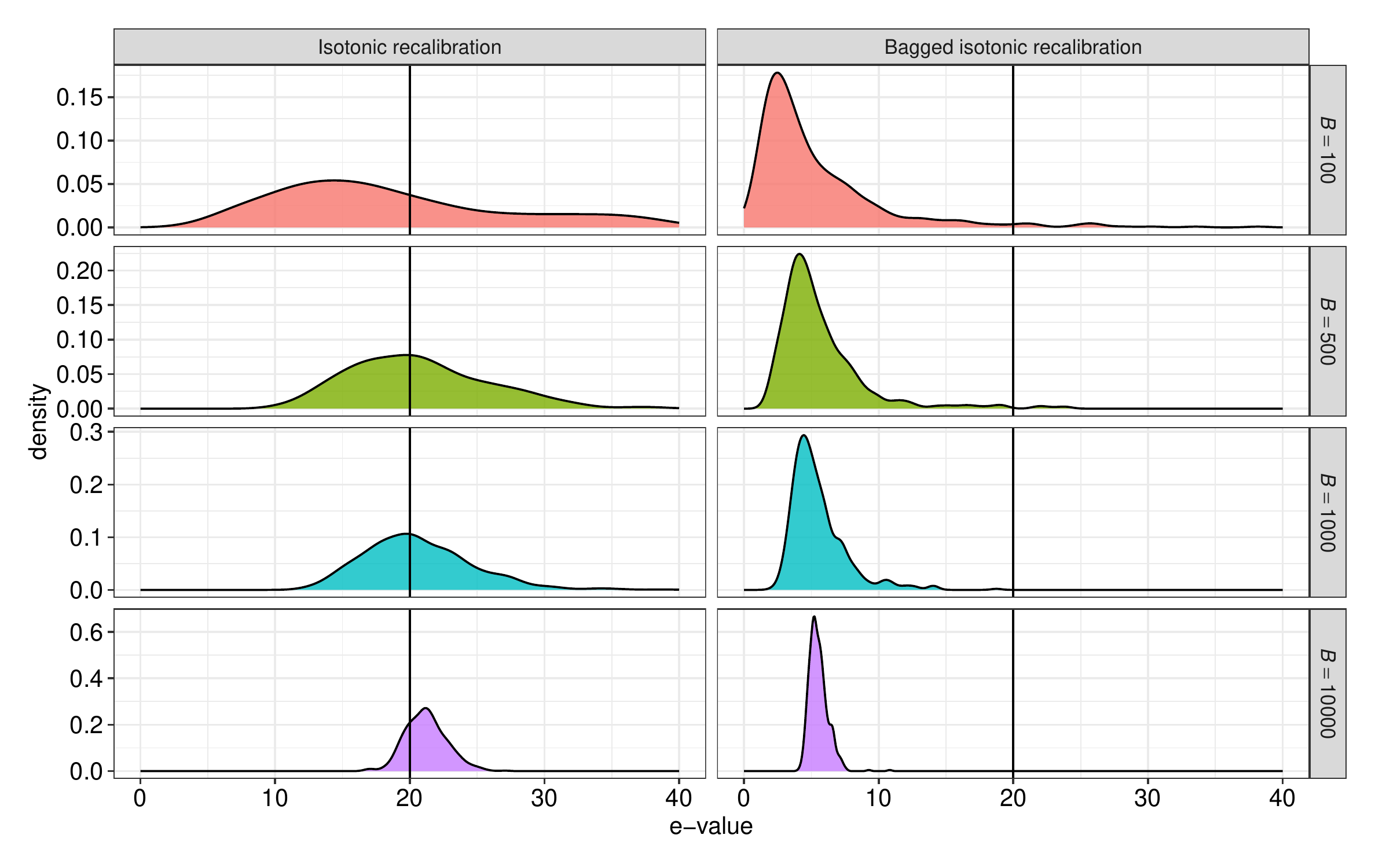}
\caption{Kernel density estimates of 500 e-values, obtained by starting the feasible test version from different random seeds and hence implying different random splits, for the two recalibrated prediction methods in the application based on $B = \{100,500,1000,10000\}$ bootstrap replications (in Algorithm \ref{alg:feasible}).
	For the bagged recalibration model, we observe 17 e-values above 20 for $B=100$, 5 for $B=500$, and none for $B \in \{1000, 10000\}$.} 
\label{fig:HLinstabilities}
\end{figure}

In contrast to the classical HL test, the theoretical version of the eHL test described in Algorithm \ref{alg:esym} is tuning parameter free due to the use of the isotonic regression method. 
This is unfortunately not true for the feasible eHL test described in Algorithm \ref{alg:feasible} that might be sensitive to the chosen sampling splits in the bootstrap-like replications.
In particular, one has to choose the number of replications $B$ large enough such that the resulting e-values are not sensitive to the random numbers (i.e., the `random seed') that determine the sample split.

To analyse this effect in our practical data example, Figure \ref{fig:HLinstabilities} visualizes the empirical distribution of the e-values (for tests based on different random splits), for varying bootstrap replications $B \in \{100, 500, 1000, 10000\}$.
While there is indeed some variation in the test result for smaller values of $B$, the e-values are relatively stable for $B=10000$, the choice we employ in the empirical application.
E.g., for the isotonic recalibration method, essentially all e-values are between 16 and 27, implying (conservative) p-values between $1/27 \approx 0.037$ and $1/16 = 0.0625$.
Similarly, the p-values in the bagged isotonic recalibration implied by the respective e-values range between $1/8 = 0.125$ and $1/4 = 0.25$.
In contrast, the variation of the HL test p-values in Table \ref{tab:HLinstabilities} is much more substantial and includes clear test rejections as well as many p-values above any commonly chosen significance level.

\end{document}